\newtheorem{lem}{Lemma}
\newtheorem{theorem}{Theorem}
\begin{document}

\title{\huge DILAND: An Algorithm for Distributed Sensor Localization with Noisy Distance
Measurements}
\author{Usman~A.~Khan$^\dagger$\thanks{$^\dagger$All authors contributed equally to
the paper. This work was partially supported by NSF under grants
\#~ECS-0225449 and~\#~CNS-0428404, and by ONR under grant
\#~MURI-N000140710747.},~Soummya~Kar$^\dagger$,~and~Jos\'e~M.~F.~Moura$^\dagger$\\
            Department of Electrical and Computer Engineering\\
            Carnegie Mellon University, 5000 Forbes Ave, Pittsburgh, PA 15213\\
            \{ukhan, moura\}@ece.cmu.edu,~soummyak@andrew.cmu.edu\\
            Ph: (412)268-7103 Fax: (412)268-3890
}

\maketitle
\begin{abstract}
In this correspondence, we present an algorithm for distributed sensor localization with noisy distance measurements (DILAND) that extends and makes the DLRE more robust. DLRE is a distributed sensor
localization algorithm in~$\mathbb{R}^m~(m\geq1)$ introduced in~\cite{usman_loctsp:08}. DILAND operates when
\begin{inparaenum}[(i)]
\item the communication among the sensors is noisy; \item the
communication links in the network may fail with a non-zero
probability; and \item the measurements performed to compute
distances among the sensors are corrupted with noise.
\end{inparaenum}
The sensors (which do not know their locations) lie in the convex hull of at least~$m+1$ anchors (nodes that know their own locations.) Under minimal assumptions on the connectivity and
triangulation of each sensor in the network, this correspondence shows that, under the broad random phenomena described above, DILAND
converges almost surely (a.s.) to the exact sensor locations.
\end{abstract}

\textbf{Keywords:} Distributed iterative sensor localization;
sensor networks; Cayley-Menger determinant; barycentric
coordinates; absorbing Markov chain; stochastic approximation;
anchor.
\newpage

\section{Introduction}\label{intro}
Localization is an important problem in sensor networks, not only on its own right, but often as the first step toward solving more complicated and diverse network tasks, which may include environment monitoring, intrusion detection, and routing in geographically distributed communication networks. The problem we consider is when a large number of sensors do not know their locations, only a very few of them know their own. In \cite{usman_loctsp:08}, we presented a distributed sensor
localization (DILOC) algorithm in~$\mathbb{R}^m~(m\geq1)$, when we can divide the~$N$ nodes in the sensor network into these two
sets: the set~$\kappa$ of~$n$ \emph{anchors} where $n\geq m+1$ and the set~$\Omega$
of~$M$ \emph{sensors}, with typically $N\gg n$. The~$n$ anchors are the nodes that know their exact locations, whereas the~$M$ sensors are the nodes that do not know their locations\footnote{In the sequel, we always use this disambiguation for sensors and anchors. When the statement is true for both sensors and anchors, we use the term \emph{node}.}. We assume that the sensors lie in the convex hull of the anchors, i.e., $\mathcal{C}(\Omega)\subset\mathcal{C}(\kappa)$, where $\mathcal{C}(\cdot)$ denotes the convex hull\footnote{The minimal number of anchors required for a non-trivial convex hull in $m$-dimensional ($m$D) space is $m+1$ that is a triangle in $2$D space. We may have more than $m+1$ anchors forming the boundary of a polygon for less stringent requirements on sensor placement, see \cite{usman_allerton:08} for details.}. To each sensor~$l$ in the network, we associate a triangulation set\footnote{In \cite{usman_loctsp:08}, we give a convex hull inclusion test to verify a triangulation set, and we relate the communication radii and the density of deployment to guarantee triangulation with a high probability. We also study the probability of a successful triangulation at each sensor. For this study, we assume a randomly deployed sensor network where the number of sensors in any given area follows a Poisson distribution.}, $\Theta_l$, which is a set of $m+1$ neighboring nodes such that sensor~$l$ lies in their convex hull, i.e., $l\in\mathcal{C}(\Theta_l)$. In DILOC, each sensor,~$l$, updates its location estimate as a linear convex
combination of the estimates of the nodes in its triangulation set, $\Theta_l$, where the coefficients of the linear combination are the barycentric coordinates. Under minimal assumptions on network connectivity, and that the sensor $l$ knows the precise distances in the set
\begin{equation}\label{calD}
\mathcal{D}_l\triangleq\{l\}\cup\Theta_l,
\end{equation}
reference~\cite{usman_loctsp:08} shows that DILOC converges to the exact sensor locations. An interesting contribution of DILOC is that it reduces the centralized nonlinear problem of localization to a linear distributed iterative algorithm under broad assumptions, which can be implemented through local inter-sensor communication in real-time.

Reference~\cite{usman_loctsp:08} extends DILOC and presents a distributed localization algorithm in random environments (DLRE). DLRE is a
stochastic approximation version of DILOC where the DILOC
iterations are weighted with a decreasing weight sequence that
follows a persistence condition. DLRE operates under the following random phenomena:
\textbf{(B.1)} the communication among the sensors and the nodes in their
triangulation set is noisy, i.e., at the~$t$-th iteration,
sensor~$l$ receives only corrupted versions, $y_{ln}^{j}(t)$, $1\leq j\leq m$, of the $m$~components of the neighboring
node~$n$'s state, i.e., $x_{n}^{j}(t)$, given by
\begin{equation}\label{rnd2}
y_{ln}^{j}(t)=x_{n}^{j}(t)+v_{ln}^{j}(t),\qquad n\in\Theta_l,
\end{equation}
where~$\left\{v_{ln}^{j}(t)\right\}_{l,n,j,t}$ is a family of independent zero-mean random variables with finite second moments; \textbf{(B.2)} each inter-node communication link is
%active with a non-zero probability,~$q_{ln}$, where~$0<q_{ln}\leq 1$,
 modeled by a binary random
variable, $e_{ln}(t), l\in\Omega, n\in\Theta_l$, which is~$1$ (active link) with probability $q_{ln}$, and $0$ (link failure) with probability $1-q_{nl}$; and \textbf{(B.3)} the local barycentric coordinates, computed at iteration $t$ from the current
noisy distance measurements, can be represented as a perturbation of the exact barycentric coordinates. Reference~\cite{usman_loctsp:08} shows that, under \textbf{(B.1)--(B.3)}  and if the perturbation in~\textbf{(B.3)} is unbiased, DLRE converges to the sensor exact locations; however, if the perturbation in~\textbf{(B.3)} is biased,  DLRE converges with a steady-state error (bias).

In this correspondence, we modify DLRE to
present the algorithm DILAND (distributed sensor localization with
noisy distance measurements) and show that it converges a.s. to the exact sensor locations under much broader distance
measurement noise assumptions. In DILAND, we replace \textbf{(B.3)} above with the following weaker condition: $\overline{\mbox{\textbf{(B.3)}}}$ at every
iteration $t$, we assume there exist computationally efficient
estimates of the required inter-sensor distances based on all
distance measurements till time $t$ such that these estimates
are consistent, i.e., they converge a.s.~to the exact distances
as $t\rightarrow\infty$. The state update in DILAND uses these
estimates to compute the local barycentric coordinates, whereas
the update in DLRE uses only the current distance measurements.
%Note that as the algorithm progresses, this assumption requires
%the noisy distance measurements
%collected over time to lead to a consistent estimate of the actual
%distance.
 The consistency assumption on the estimates of the inter-sensor distances is quite weak and, as will be shown, is
applicable under practical schemes of estimating inter-sensor
distances through: \begin{inparaenum}[(i)] \item received signal strength~(RSS) and \item
time-of-arrival~(TOA) (see~\cite{patwari_thesis}.)
\end{inparaenum}
We emphasize that DILAND does not require spatial or temporal distributional assumptions on either the communication or the distance measurement noises, except for finiteness of the second order moment, see \textbf{(B.1)}.

Because of {\bf (B.3)}, DLRE \cite{usman_loctsp:08} converges to the exact sensor locations
with a steady state error when the resulting perturbation of the
barycentric coordinates is biased. In contrast, under the new assumption
$\overline{\mbox{\bf (B.3)}}$, DILAND converges to the exact sensor locations
regardless of the bias introduced in the system matrix (at each
iteration) due to noisy distance measurements. This new setup
leads to a behavior and analysis for DILAND that is different from
DLRE's in~\cite{usman_loctsp:08}. This is because using
distance estimates based on the entire past leads to an inherent
strong statistical dependence in the iterative scheme. This dependence makes the
analysis of DILAND different from standard stochastic
approximation arguments, which were used in~\cite{usman_loctsp:08} to prove the convergence properties of DLRE. Furthermore (as we will show in Section~\ref{disc_DILAND}), if we do not have link failures and communication noise, the weight sequence, $\alpha(t)$, in DILAND does not require the square summability condition required by DLRE. Hence, DILAND can be designed to converge faster than DLRE by choosing the DILAND weight sequence to sum faster to infinity than the DLRE weight sequence.

We describe the rest of the paper. Section~\ref{pw} briefly recapitulates our prior work on DILOC and DLRE. We then study the distance estimates and present the main result, DILAND, of this correspondence in Section~\ref{diland_sec}. Section~\ref{sim} presents simulations, and Section~\ref{conc} concludes the paper.

\section{Prior Work}\label{pw}
We briefly recapitulate the setting of the distributed
localization problem, details and discussions are
in~\cite{usman_loctsp:08}.
\subsection{Distributed Localization Algorithm (DILOC)}\label{senm}
Let the row vector~$\mathbf{x}_l(t)\in\mathbb{R}^{1\times m}, l\in\Omega$, be the
$m$-dimensional state that represents the estimated coordinates of sensor~$l$ at time~$t$. Similarly, let the row vector~$\mathbf{u}_k\in\mathbb{R}^{1\times m},~k\in\kappa,$ be the~$m$-dimensional state of the location coordinates of anchor~$k$. DILOC updates at time~$t$ are the following:
\begin{eqnarray}
\mathbf{u}_k(t+1) &=& \mathbf{u}_k(t),\qquad\qquad\qquad\qquad\qquad\qquad~~~ k\in \kappa,\\\label{alg2}
\mathbf{x}_l(t+1) &=& \sum_{n\in{\Omega\cap\Theta_l}}p_{ln} \mathbf{x}_n(t) +
\sum_{k\in{\kappa\cap\Theta_l}}b_{lk} \mathbf{u}_k,\qquad l\in\Omega.
\end{eqnarray}
%where~the superscript~$j$ denotes the~$j$th component of the location
%(row) vector,
 The symbol $\Theta_l$ is the triangulation set of sensor $l$, and~$p_{ln}$ and~$b_{lk}$ are the sensor-sensor and the sensor-anchor barycentric
coordinates, \cite{riemann_book}, respectively; these are computed using the inter-node distances in $\mathcal{D}_l$ and the Cayley-Menger determinants~\cite{cayley_men:86} (see~\cite{usman_loctsp:08}.) Let the set of inter-sensor distances required to compute all the barycentric coordinates be ($\ast$~represents the exact value)
\begin{equation}\label{ex_dnn}
\mathbf{d}^\ast =
\{d^\ast_{kn}~|~k,n\in\mathcal{D}_l,l\in\Omega\}.
\end{equation}
The barycentric coordinates lead to the system
matrices~$\mathbf{P}\left(\mathbf{d}^\ast\right) =
\left\{p_{ln}\right\}$ and~$\mathbf{B}\left(\mathbf{d}^\ast\right) =
\left\{b_{lk}\right\}$ of DILOC (we use~$\left(\mathbf{d}^\ast\right)$
to show an implicit dependence on the inter-sensor distances.) We define the matrix $\mathbf{X}(t)$
collecting the row vector states~$\mathbf{x}_l(t)$ for all sensors~$l$
and with column vectors $\mathbf{x}^j(t)$; similarly for~$\mathbf{U}$.
%\begin{align}
\begin{eqnarray}
\label{eqn:X,xj}
M\times m:\:\mathbf{X}(t) &=& \left[\mathbf{x}_1(t)^T,\ldots,\mathbf{x}_M(t)^T\right]^T
=\left[\mathbf{x}^1(t)\cdots\mathbf{x}^j(t)\cdots\mathbf{x}^m(t)\right]
%,\:
%\mathbf{x}^j(t)\in\mathbb{R}^{m\times 1}\:\mbox{column~$j$ of $\mathbf{X}(t)$}
\\
\label{eqn:U,uj}
(m+1)\times m:\:\mathbf{U} &=& \left[\mathbf{u}_1^T,\ldots,\mathbf{u}_{m+1}^T\right]^T
=\left[\mathbf{u}^1(t)\cdots\mathbf{u}^j(t)\cdots\mathbf{u}^m(t)\right].
%,\:
%\mathbf{u}^j(t)\in\mathbb{R}^{m\times 1}\:\mbox{column~$j$ of $\mathbf{U}$}.
%\end{align}
\end{eqnarray}
In~(\ref{eqn:X,xj}) and~(\ref{eqn:U,uj}), sub and superindices
indicate row and column vectors of the corresponding matrices, i.e.,
$\mathbf{x}^j(t)$ is the column vector that collects component~$j$
of the row vector state $\mathbf{x}_l(t)$  of all sensors~$l$;
similarly for $\mathbf{u}^j(t)$.
%We have the following result from~\cite{usman_loctsp:08}.
%The proof is available in~\cite{usman_loctsp:08}.

We recall from~\cite{usman_loctsp:08} the following assumptions.

{\bf Structural assumptions}: {\bf (A1)} There are at least~$m+1$ anchors, i.e.,~$|\kappa|=n=m+1$, that do not lie on a hyperplane in~$\mathbb{R}^{m}$; {\bf (A2)} The~$M$ sensors lie inside the convex hull of the anchors, i.e., $\mathcal{C}(\Omega)\subseteq\mathcal{C}(\kappa)$, where $\mathcal{C}(\cdot)$ denotes the convex hull; {\bf (A3)} There exists a triangulation set\footnote{In \cite{usman_loctsp:08}, we give a convex hull inclusion test to identify such triangulation set at each sensor~$l$.}, $\Theta_l\subset\Theta,~\forall~l\in\Omega$, with
$|\Theta_l| = m+1$, such that~$l\in\mathcal{C}(\Theta_l)$;
{\bf (A4)} The sensor~$l$ is assumed to have a communication link,
$n\rightarrow l$, to each~$n\in\Theta_l$ and the inter-node distances
among all the sensors in the set~$\mathcal{D}_l$
are known at sensor~$l$; and
{\bf (A5)} Each anchor,~$k \in \kappa$, has a communication link to at
least one sensor in~$\Omega$.
%Under \textbf{(A1)-(A5)},
%DILOC updates the $j$th coordinate of sensor~$l$, $x_l^j(t)$,
%of the state update, $\mathbf{x}_l(t)$, of sensor~$l$ at time~$t$
%by the following linear distributed algorithm,
%
%
%
%
\begin{theorem}[Theorem~1, \cite{usman_loctsp:08}]\label{t_DILOC}
Under \textbf{(A1)-(A5)}, for DILOC~\eqref{alg2}, $\lim_{t\rightarrow\infty}\mathbf{X}(t+1) =
(\mathbf{I-P}(\mathbf{d}^\ast))^{-1}\mathbf{B}(\mathbf{d}^\ast)\mathbf{U}$, i.e., the states converge to the exact sensor locations.
\end{theorem}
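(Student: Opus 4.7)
The plan is to recast the DILOC recursion~\eqref{alg2} in matrix form as
\begin{equation*}
\mathbf{X}(t+1)=\mathbf{P}(\mathbf{d}^{\ast})\,\mathbf{X}(t)+\mathbf{B}(\mathbf{d}^{\ast})\,\mathbf{U},
\end{equation*}
and then read off the limit by showing that (i) the iteration is stable, i.e., the spectral radius $\rho(\mathbf{P})<1$, so $(\mathbf{I}-\mathbf{P})^{-1}$ exists and $\mathbf{P}^{t}\to 0$, and (ii) the exact location matrix $\mathbf{X}^{\ast}$ is a fixed point of the affine map. Combined, these two facts immediately give $\mathbf{X}(t+1)\to(\mathbf{I}-\mathbf{P})^{-1}\mathbf{B}\mathbf{U}=\mathbf{X}^{\ast}$ independently of the initialization.

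For step (ii), the key observation is that barycentric coordinates reproduce convex combinations exactly: since $l\in\mathcal{C}(\Theta_{l})$ by \textbf{(A3)}, the true coordinate vector of sensor $l$ equals $\sum_{n\in\Omega\cap\Theta_{l}}p_{ln}\mathbf{x}_{n}^{\ast}+\sum_{k\in\kappa\cap\Theta_{l}}b_{lk}\mathbf{u}_{k}$, so $\mathbf{X}^{\ast}=\mathbf{P}\mathbf{X}^{\ast}+\mathbf{B}\mathbf{U}$. The barycentric coordinates are nonnegative and for each row of the augmented matrix $[\mathbf{P}\ \mathbf{B}]$ sum to one, so $[\mathbf{P}\ \mathbf{B}]$ is row stochastic. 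View the nodes as states of a Markov chain where the $M$ sensors are transient and the $m+1$ anchors are absorbing: $\mathbf{P}$ encodes sensor-to-sensor transitions and $\mathbf{B}$ sensor-to-anchor transitions.

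For step (i), the main task, and the main obstacle, is to show that the induced chain is indeed absorbing, i.e., from every sensor there is a positive-probability path to some anchor in finitely many steps. This is where assumptions \textbf{(A1)-(A5)} do the work: \textbf{(A3)} guarantees that each sensor's triangulation set $\Theta_{l}$ is nondegenerate so that every barycentric coefficient is strictly positive, and \textbf{(A5)} together with the convex-hull inclusion of \textbf{(A2)} forces at least one sensor to have an anchor in its triangulation set, which in turn seeds a directed path to an anchor from every other sensor by connectivity inherited from \textbf{(A4)}. Once this reachability is established, standard absorbing-Markov-chain theory yields $\mathbf{P}^{t}\to 0$ (equivalently $\rho(\mathbf{P})<1$) and hence $(\mathbf{I}-\mathbf{P})^{-1}=\sum_{t\ge 0}\mathbf{P}^{t}$.

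Finally, unrolling the recursion gives $\mathbf{X}(t+1)=\mathbf{P}^{t+1}\mathbf{X}(0)+\sum_{s=0}^{t}\mathbf{P}^{s}\mathbf{B}\mathbf{U}$, and letting $t\to\infty$ produces the claimed limit $(\mathbf{I}-\mathbf{P}(\mathbf{d}^{\ast}))^{-1}\mathbf{B}(\mathbf{d}^{\ast})\mathbf{U}$, which by step (ii) coincides with the exact sensor locations $\mathbf{X}^{\ast}$. The verification of absorbing-chain reachability under \textbf{(A1)-(A5)} is the only nonroutine part; the remainder is a direct appeal to the Neumann series.
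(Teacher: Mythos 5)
Your overall architecture --- row-stochastic augmented matrix $[\mathbf{P}\ \mathbf{B}]$, anchors as absorbing states, $\rho(\mathbf{P})<1$ via absorbing-chain theory, Neumann series, and the fixed-point identity $\mathbf{X}^{\ast}=\mathbf{P}\mathbf{X}^{\ast}+\mathbf{B}\mathbf{U}$ from the reproducing property of barycentric coordinates --- is exactly the route taken in the cited source (note the paper's keywords: ``absorbing Markov chain,'' ``barycentric coordinates''). Steps (ii) and the final unrolling are fine. The problem is precisely in the step you yourself flag as the only nonroutine one: your justification that every sensor has a positive-probability path to an anchor does not hold up. Assumption \textbf{(A4)} is purely local --- it only asserts that sensor $l$ has links from the members of its own triangulation set $\Theta_l$ and knows the distances in $\mathcal{D}_l$. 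It provides no global connectivity, so the existence of \emph{one} sensor with an anchor in its triangulation set does not ``seed a directed path'' from every other sensor: the directed graph with edges $l\to n$ for $n\in\Theta_l$ could, for all that \textbf{(A4)} says, contain a set of sensors whose triangulation sets never leave that set. Likewise, \textbf{(A5)} asserts a communication link from each anchor to some sensor, which by itself does not put that anchor inside any sensor's triangulation set.

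The missing idea is geometric, not graph-theoretic. Let $\mathcal{S}\subseteq\Omega$ be the set of sensors from which no anchor is reachable in the directed triangulation graph. By definition $\mathcal{S}$ is closed: for every $l\in\mathcal{S}$ one has $\Theta_l\subseteq\mathcal{S}$ (if $\Theta_l$ contained an anchor or a sensor that reaches an anchor, so would $l$). Hence every $l\in\mathcal{S}$ lies in the convex hull of $m+1$ \emph{other} points of $\mathcal{S}$. But a nonempty finite point set has an extreme point of its convex hull, and an extreme point cannot be written as a convex combination of the other points of the set. Therefore $\mathcal{S}=\emptyset$, every sensor reaches an anchor, the chain is absorbing, and $\rho(\mathbf{P}(\mathbf{d}^{\ast}))<1$ follows. (A related care point: for the relevant transition probabilities to be strictly positive you need $l$ in the \emph{interior} of $\mathcal{C}(\Theta_l)$, i.e., strictly positive barycentric coordinates; $l\in\mathcal{C}(\Theta_l)$ alone only gives nonnegativity.) With this replacement for your reachability argument, the rest of your proof goes through as written.
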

\subsection{Distributed Localization in Random Environments (DLRE)}
We start by contrasting assumptions~\textbf{(B.3)} and $\overline{\mbox{\bf (B.3)}}$ (introduced in Section~\ref{intro}).

\textbf{(B.3) Small perturbation of system matrices:} Recall from Section~\ref{intro} that at each sensor~$l$ the distances required to
compute the barycentric coordinates are the inter-node distances
in the set~$\mathcal{D}_l$. In reality, the sensors do not know
the precise distances, $d_{(\cdot)(\cdot)}^\ast$, but estimate the
distances $\widehat{d}_{(\cdot)(\cdot)}(t)$, from RSS or TOA
measurements at time $t$. When we have noisy distance
measurements, we iterate with system
matrices~$\mathbf{P}(\widehat{\mathbf{d}}_t)$ and
$\mathbf{B}(\widehat{\mathbf{d}}_t)$, not with
$\mathbf{P}(\mathbf{d}^\ast)$ and
$\mathbf{B}(\mathbf{d}^\ast)$, where $\widehat{\mathbf{d}}_t $ is the set like $\mathbf{d}^{\ast}$ in~(\ref{ex_dnn}) that collects all required inter-node distance estimates
$\widehat{d}_{kn}(t)$, $k,n\in\mathcal{D}_l,l\in\Omega$.
%the required inter-node distance estimates is
%\begin{equation}
%\widehat{\mathbf{d}}_t = \{\widehat{d}_{kn}(t)~|~k,n\in\mathcal{D}_l,l\in\Omega\}.
%\end{equation}
Assuming the distance measurements are statistically independent
over time, $\mathbf{P}(\widehat{\mathbf{d}}_t)$ and
$\mathbf{B}(\widehat{\mathbf{d}}_t)$ can be written as
\begin{align}\label{rnd3}
\mathbf{P}\left(\widehat{\mathbf{d}}_t\right)&=\mathbf{P}(\mathbf{d}^\ast)
+\mathbf{S}_\mathbf{P}+\widetilde{\mathbf{S}}_\mathbf{P}(t)\triangleq\{\widehat{p}_{ln}(t)\},&\mathbf{B}\left(\widehat{\mathbf{d}}_t\right)=&
\mathbf{B}(\mathbf{d}^\ast)
+\mathbf{S}_\mathbf{B}+\widetilde{\mathbf{S}}_\mathbf{B}(t)\triangleq\{\widehat{b}_{ln}(t)\},&
\end{align}
where~$\mathbf{S}_\mathbf{P}$ and~$\mathbf{S}_\mathbf{B}$ are mean
measurement errors, and
$\{\widetilde{\mathbf{S}}_\mathbf{P}(t)\}_{t\geq0}$ and
$\{\widetilde{\mathbf{S}}_\mathbf{B}(t)\}_{t\geq0}$ are
independent sequence of random matrices with zero-mean and finite
second moments. In particular, even if the distance estimates are unbiased, the
computed~$\mathbf{P}(\widehat{\mathbf{d}}_t)$ and
$\mathbf{B}(\widehat{\mathbf{d}}_t)$ may have non-zero biases,
$\mathbf{S}_\mathbf{P}$ and~$\mathbf{S}_\mathbf{B}$, respectively. The following is the small bias assumption, we made in~\cite{usman_loctsp:08}: $\rho(\mathbf{P}(\mathbf{d}^\ast)+\mathbf{S}_\mathbf{P})<1$. The
DLRE algorithm and its convergence are summarized in the following
theorem. % (for the proof, see~\cite{usman_loctsp:08}.)

\begin{theorem}[Theorem 3, \cite{usman_loctsp:08}]
\label{thm:drle}
Under the noise model \textbf{(B.1)-(B.3)}, the DLRE algorithm given by
\begin{equation}
\label{algass:10}
\mathbf{x}_{l}(t+1)=\left(1-\alpha\left(t\right)\right)\mathbf{x}_{l}(t)+
\alpha(t)\left[\sum_{n\in\Omega\cap\Theta_{l}}
\frac{e_{ln}(t)\widehat{p}_{ln}(t)}{q_{ln}}\left(\mathbf{x}_{n}(t)
+\mathbf{v}_{ln}(t)\right)+\sum_{k\in\kappa\cap\Theta_{l}}
\frac{e_{lk}(t)\widehat{b}_{lk}(t)}{q_{lk}}\left(\mathbf{u}_{k}
+\mathbf{v}_{lk}(t)\right)\right],
\end{equation}
for~$l\in\Omega$,  with $\alpha(t)$ satisfying the persistence condition
$\alpha(t)\geq 0, \sum_t\alpha(t) = \infty, \sum_t\alpha^2(t) < \infty$,
%\begin{equation}\label{dlre_alp}
%\alpha(t)\geq 0,\qquad\sum_t\alpha(t) = \infty,\qquad\sum_t\alpha^2(t) < \infty,
%\end{equation}
converges almost surely to $\lim_{t\rightarrow\infty}\mathbf{X}(t+1) =
(\mathbf{I-P-S_P})^{-1}(\mathbf{B+S_B})\mathbf{U}(0)$.
%\begin{equation}\label{dlre_convv}
%\lim_{t\rightarrow\infty}\mathbf{X}(t+1) =
%(\mathbf{I-P-S_P})^{-1}(\mathbf{B+S_B})\mathbf{U}(0).
%\end{equation}
\end{theorem}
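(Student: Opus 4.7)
The plan is to recast the scalar per-sensor update \eqref{algass:10} in matrix form and then recognize it as a linear stochastic approximation recursion whose convergence follows from a standard Robbins--Monro argument, once the drift and the martingale noise are properly isolated. First, I would stack the states into the matrix $\mathbf{X}(t)$ and write
\begin{equation*}
\mathbf{X}(t+1) = \big(1-\alpha(t)\big)\mathbf{X}(t) + \alpha(t)\,\big[\widehat{\mathbf{P}}(t)\mathbf{X}(t) + \widehat{\mathbf{B}}(t)\mathbf{U} + \mathbf{N}(t)\big],
\end{equation*}
where $\widehat{\mathbf{P}}(t)$ and $\widehat{\mathbf{B}}(t)$ absorb the factors $e_{ln}(t)\widehat{p}_{ln}(t)/q_{ln}$ and $e_{lk}(t)\widehat{b}_{lk}(t)/q_{lk}$, and $\mathbf{N}(t)$ collects all terms involving the communication noise $\mathbf{v}_{ln}(t),\mathbf{v}_{lk}(t)$. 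Because $E[e_{ln}(t)/q_{ln}] = 1$ and the link variables, distance measurement errors, and communication noises are independent of each other at each $t$, taking conditional expectation of the coefficient factors given the natural filtration $\mathcal{F}_t$ yields $E[\widehat{\mathbf{P}}(t)\mid\mathcal{F}_t] = \mathbf{P}(\mathbf{d}^\ast)+\mathbf{S}_{\mathbf{P}}$ and $E[\widehat{\mathbf{B}}(t)\mid\mathcal{F}_t] = \mathbf{B}(\mathbf{d}^\ast)+\mathbf{S}_{\mathbf{B}}$ (using \textbf{(B.3)} and the temporal independence of $\widetilde{\mathbf{S}}_{\mathbf{P}}(t),\widetilde{\mathbf{S}}_{\mathbf{B}}(t)$).

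Second, I would split the recursion into its conditional mean and a centered term:
\begin{equation*}
\mathbf{X}(t+1) = \mathbf{X}(t) + \alpha(t)\,\big[-(\mathbf{I}-\mathbf{P}-\mathbf{S}_{\mathbf{P}})\mathbf{X}(t) + (\mathbf{B}+\mathbf{S}_{\mathbf{B}})\mathbf{U}\big] + \alpha(t)\,\mathbf{M}(t+1),
\end{equation*}
where $\mathbf{M}(t+1) \triangleq [\widehat{\mathbf{P}}(t)-(\mathbf{P}+\mathbf{S}_{\mathbf{P}})]\mathbf{X}(t) + [\widehat{\mathbf{B}}(t)-(\mathbf{B}+\mathbf{S}_{\mathbf{B}})]\mathbf{U} + \mathbf{N}(t)$ is an $\mathcal{F}_{t+1}$-measurable martingale difference sequence. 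Using \textbf{(B.1)} (finite second moments of $\mathbf{v}_{ln}$), the boundedness of the barycentric coefficient perturbations $\widetilde{\mathbf{S}}_{\mathbf{P}}(t),\widetilde{\mathbf{S}}_{\mathbf{B}}(t)$ guaranteed by them being proper probability-vector perturbations, and the Bernoulli nature of $e_{ln}(t)$, I would bound
\begin{equation*}
E\big[\|\mathbf{M}(t+1)\|^2\mid\mathcal{F}_t\big] \le K_1 + K_2\|\mathbf{X}(t)\|^2,
\end{equation*}
which is the standard growth condition for linear stochastic approximation.

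Third, since the small-bias hypothesis $\rho(\mathbf{P}+\mathbf{S}_{\mathbf{P}})<1$ makes $\mathbf{I}-\mathbf{P}-\mathbf{S}_{\mathbf{P}}$ invertible and (upon suitable scaling) negative-definite in an appropriate inner product, the unique equilibrium of the associated ODE $\dot{\mathbf{X}} = -(\mathbf{I}-\mathbf{P}-\mathbf{S}_{\mathbf{P}})\mathbf{X} + (\mathbf{B}+\mathbf{S}_{\mathbf{B}})\mathbf{U}$ is exactly $\mathbf{X}^\star = (\mathbf{I}-\mathbf{P}-\mathbf{S}_{\mathbf{P}})^{-1}(\mathbf{B}+\mathbf{S}_{\mathbf{B}})\mathbf{U}$, and it is globally asymptotically stable. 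I would then invoke a standard a.s.\ convergence theorem for linear stochastic approximation (e.g., Robbins--Siegmund or Theorem 2.2 of Nevel'son--Khas'minskii) using the persistence conditions $\sum_t\alpha(t)=\infty,\ \sum_t\alpha^2(t)<\infty$, together with the martingale growth bound above, to conclude $\mathbf{X}(t)\to\mathbf{X}^\star$ a.s.

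The main obstacle I anticipate is \emph{not} identifying the fixed point or verifying the martingale structure, which are mechanical, but rather controlling the state-dependent noise term $[\widehat{\mathbf{P}}(t)-\mathbf{P}-\mathbf{S}_{\mathbf{P}}]\mathbf{X}(t)$: before invoking any stochastic approximation convergence theorem I must first establish a.s.\ $\sup_t\|\mathbf{X}(t)\|<\infty$, since the conditional second moment of the noise scales with $\|\mathbf{X}(t)\|^2$. This would likely be handled by constructing a nonnegative supermartingale from $\|\mathbf{X}(t)-\mathbf{X}^\star\|^2$ using the contractivity of $\mathbf{I}-\alpha(t)(\mathbf{I}-\mathbf{P}-\mathbf{S}_{\mathbf{P}})$ for large $t$, and then applying the Robbins--Siegmund lemma both to obtain the boundedness and the final convergence in one stroke.
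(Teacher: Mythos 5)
This theorem is imported from \cite{usman_loctsp:08} and the present correspondence gives no proof of it, but its own discussion (Section~\ref{disc_DILAND}) confirms that DLRE is analyzed exactly as you propose: as a standard linear stochastic approximation with temporally independent perturbations forming a martingale-difference noise, handled in the Nevel'son--Khas'minskii/Robbins--Siegmund framework, with the biased drift $-(\mathbf{I}-\mathbf{P}-\mathbf{S_P})\mathbf{X}+(\mathbf{B}+\mathbf{S_B})\mathbf{U}$ and equilibrium $(\mathbf{I}-\mathbf{P}-\mathbf{S_P})^{-1}(\mathbf{B}+\mathbf{S_B})\mathbf{U}$ made stable by the small-bias hypothesis $\rho(\mathbf{P}+\mathbf{S_P})<1$. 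Your outline is correct and matches this route; the one imprecision is that the second-moment bound on $\widetilde{\mathbf{S}}_\mathbf{P}(t),\widetilde{\mathbf{S}}_\mathbf{B}(t)$ should be taken from the finiteness stipulated in \textbf{(B.3)} rather than from the perturbed barycentric coordinates being ``proper probability-vector perturbations,'' since coordinates computed from noisy distances need not remain convex weights.
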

 The DLRE converges to the exact sensor locations for unbiased random system matrices, i.e., $\mathbf{S_P=S_B=0}$, as established in Theorem~\ref{thm:drle}. As pointed out earlier, even if the distance estimates are unbiased, the system matrices
computed from them may be biased. In such a situation, the DLRE
leads to a nonzero steady state error (bias).

\section{Distributed Sensor Localization with Noisy Distance measurements}\label{diland_sec}
As mentioned, since DLRE at
time~$t$ uses only the current RSS or TOA measurements
to compute distance estimates, the resulting system matrices have
an error bias, i.e., $\mathbf{S_P\neq0}$ and
$\mathbf{S_B\neq0}$. We  use the information from past distance measurements to compute the system matrices
at time~$t$; these become a function of the entire past measurements,
$\{RSS_{s}\}_{s\leq t}$ or $\{TOA_{s}\}_{s\leq t}$. The DILAND algorithm efficiently
utilizes the past information and, as will be shown, leads to a.s.
convergence to the exact sensor locations under practical distance
measurement schemes in sensor networks. To this aim, we review typical models of distance
measurements in wireless settings in Section~\ref{pat_mod} and introduce the DILAND
algorithm in Section~\ref{conv_DILAND}. Section~\ref{disc_DILAND} discusses DILAND.

\subsection{Models for distance measurements}\label{pat_mod}
We explore two standard sensor
networks distance measurements: Received Signal Strength (RSS) and
Time-of-Arrival (TOA). We
borrow experimental and theoretical results from
\cite{patwari_thesis}.

\subsubsection{Received signal strength (RSS)}
In wireless, the signal power decays with a path-loss
exponent, $n_p$, which depends on the environment. If sensor~$a$
sends a packet to sensor~$b$, then $\mbox{RSS}_{ab}$ is the power of the signal
received by sensor~$b$ and the maximum likelihood estimator,
$\widehat{d}_{ab}$, of the distance, $d_{ab}$, between sensors~$a$
and~$b$  is \cite{patwari_thesis}
\begin{equation}\label{rss_est}
\widehat{d}_{ab} = \Delta_0 10^{\frac{\Pi_0 - \mbox{\tiny
RSS}_{ab}}{10n_p}},
\end{equation}
where~$\Pi_0$ is the received power at a short reference distance
$\Delta_0$. For this estimate,
\begin{equation}
\mathbb{E}\left[\widehat{d}_{ab}\right] = Cd_{ab},
\end{equation}
where~$C$ is a multiplicative bias factor. Based on calibration
experiments and a priori knowledge of the environment, we can
obtain precise estimates of~$C$; for typical channels, $C\approx
1.2$ \cite{rappaport_book} and hence scaling~\eqref{rss_est} by~$C$
gives us an unbiased estimate. If the estimate of~$C$ is not
acceptable, we can employ the following scheme.

DILAND is iterative and data is
exchanged at each iteration~$t$. We then have the measurements on~$RSS$
and~$C$ at each iteration. Hence, the distance estimate we employ
is
\begin{equation}\label{sicteen}
\widetilde{d}_{ab}(t) =
\dfrac{\widehat{d}_{ab}(t)}{\widehat{C}(t)},
\end{equation}
where~$\widehat{C}(t)$ is the estimate of~$C$ (for details on this
estimate, see \cite{patwari_thesis} and references therein) at the~$t$-th iteration of
DILAND. Assuming that~$\widehat{d}_{ab}(t)$ and~$\widehat{C}(t)$
are statistically independent (which is reasonable if we use different measurements for these estimates and assume
that the measurement noise is independent over time), we have
\begin{equation}\label{secteen}
\mathbb{E}\left[\widetilde{d}_{ab}(t)\right] = d_{ab}.
\end{equation}

Since at time $t$, we have knowledge of $\{\widetilde{d}_{ab}(s)\}_{0\leq s\leq t}$, we can use the following sequence, $\{\overline{d}_{ab}(t)\}_{t\geq0}$, of distance estimates to compute the barycentric coordinates at time $t$:
\begin{equation}\label{nent}
\overline{d}_{ab}(t) = \dfrac{1}{t}\sum_{s\leq t}\widetilde{d}_{ab}(s)= \dfrac{t-1}{t} \overline{d}_{ab}(t-1) + \dfrac{1}{t}\widetilde{d}_{ab}(t),\qquad \overline{d}_{ab}(0) = \widetilde{d}_{ab}(0).
\end{equation}
Then, from \eqref{sicteen}-\eqref{secteen} and the strong law of large numbers, we have $\overline{d}_{ab}(t)\rightarrow {d}_{ab}$ a.s.~as $t\rightarrow\infty$.

\subsubsection{Time-of-arrival (TOA)}
Time-of-arrival is also used in wireless settings to estimate
distances. TOA is the time for a signal to propagate
from sensor~$a$ to sensor~$b$. To get the distance, TOA is multiplied by
$\nu_p$, the propagation velocity. Over short ranges, the measured time
delay,~$T_{ab}$, can be modeled as a Gaussian
distribution\footnote{Although, as noted before, DILAND does not
require any distributional assumption.} \cite{patwari_thesis} with
mean~$d_{ab}/\nu_p + \mu_T$ and variance~$\sigma_T^2$. The
distance estimate is given by $\widehat{d}_{ab} = (T_{ab} -
\mu_T)\nu_p.$ Based on calibration experiments and a priori
knowledge of the environment, we can obtain precise estimates
of the bias~$\mu_T$; wideband DS-SS measurements \cite{dea:03} have
shown~$\mu_T=10.9$~ns.

Since DILAND is iterative, we can make the required
measurements at each iteration~$t$ and compute an estimate,
$\widehat{\mu}_T(t)$, of the bias, $\mu_T$ (for details on
this computation, see \cite{dea:03}). Then, using the same procedure described for RSS measurements, we can obtain a sequence, $\{\overline{d}_{ab}(t)\}_{t\geq0}$, of distance estimates such that $\overline{d}_{ab}(t)\rightarrow {d}_{ab}$ a.s. as $t\rightarrow\infty$.

In both cases, we note that, if~$\{Z(t)\}_{t\geq 0}$
is a sequence of distance measurements, where $Z=RSS$ or $Z=TOA$,
collected over time, then there exist
estimates~$\overline{\mathbf{d}}_{t}$ with the
property: $\overline{\mathbf{d}}_{t}\rightarrow\mathbf{d}^{\ast}$
a.s. as~$t\rightarrow\infty$. In other words, by a computationally efficient process (e.g., simple averaging) using past distance information, we can
estimate the required inter-sensor distances to arbitrary
precision as~$t\rightarrow\infty$. This leads to the following
natural assumption.

$\overline{\mathbf{(B.3)}}$ \textbf{Noisy distance measurements:}
Let $\{Z(t)\}_{t\geq 0}$ be any sequence of inter-node distance
measurements collected over time. Then, there exists a sequence of
estimates $\{\overline{\mathbf{d}}_{t}\}_{t\geq 0}$ such that,
for all~$t$, $\overline{\mathbf{d}}_{t}$ can be computed
\emph{efficiently} from~$\{X(s)\}_{s\leq t}$ and we have
\begin{equation}
\label{noisy_dist_ass}
\mathbb{P}\left[\lim_{t\rightarrow\infty}\overline{\mathbf{d}}_{t}=
\mathbf{d}^{\ast}\right]=1
\end{equation}

We now present the algorithm DILAND under assumption~$\overline{\mathbf{(B.3)}}$.

\subsection{Algorithm}
\label{conv_DILAND} For clarity of presentation, we analyze
DILAND in the context of noisy distance
measurements only and assume that the inter-sensor communication is perfect
(i.e., no link failures nor communication noise\footnote{In
Subsection~\ref{disc_DILAND}, we discuss the effect of link failures and communication noise.}.)
Let~$\mathbf{P}(\overline{\mathbf{d}}_{t})\triangleq\{\overline{p}_{ln}(t)\}$
and
$\mathbf{B}(\overline{\mathbf{d}}_{t})\triangleq\{\overline{b}_{lk}(t)\}$
be the matrices of barycentric coordinates computed at time~$t$
from the distance estimate~$\overline{\mathbf{d}}_{t}$.
 The DILAND algorithm updates the $j$-th component of the state of all sensors, i.e., updates $\mathbf{x}^j(t)$, the $j$-th component of the location estimates of all sensors, as follows:
 \begin{eqnarray}\label{diland_itm}
\mathbf{x}^j(t+1) &=& (1-\alpha(t))\mathbf{x}^j(t) + \alpha(t)
\left[\mathbf{P}\left(\overline{\mathbf{d}}_t\right)\mathbf{x}^j(t)
+ \mathbf{B}\left(\overline{\mathbf{d}}_t\right)\mathbf{u}^j
\right],\:\:1\leq j\leq m,
\end{eqnarray}
%
%
%
%\begin{equation}\label{diland_it}
%\mathbf{x}_l(t+1) = (1-\alpha(t))\mathbf{x}_l(t) + \alpha(t)\left[
%\sum_{n\in{\Omega\cap\Theta_l}}\overline{p}_{ln}(t)
%\mathbf{x}_n(t) +
%\sum_{k\in{\kappa\cap\Theta_l}}\overline{b}_{lk}(t)
%\mathbf{u}_k\right],\:\:l\in\Omega,
%\end{equation}
where the weight sequence $\alpha(t)$ satisfies
$\alpha(t)\geq 0$, $\lim_{t\rightarrow\infty}\alpha(t)=0$,
and $\sum_t\alpha(t)=\infty$.
%\begin{equation}\label{dil_alp}
%\alpha(t)\geq
%0,\qquad\lim_{t\rightarrow\infty}\alpha(t)=0,\qquad\mbox{and}\qquad\sum_t\alpha(t)=\infty.
%\end{equation}
 In particular, here we consider the following choice: for~$a>0$ and~$0<\delta\leq 1$,
\begin{equation}
\label{form_alpha} \alpha(t)=\frac{a}{(t+1)^{\delta}}.
\end{equation}
The update \eqref{diland_itm} is followed by the distance update  \eqref{nent}. The following result gives the convergence
properties of DILAND. The proof is provided in
Appendix~\ref{conv_proofs_DILAND}.
\begin{theorem}
\label{main_DILAND} Assume \textbf{(A.1)-(A.5)} and~$\overline{\mathbf{(B.3)}}$. Let~$\{\mathbf{x}^j(t), 1\leq j\leq m\}_{t\geq 0}$ be the $j$-th coordinate of the state sequence generated by DILAND~(\ref{diland_itm}).
Then $\mathbf{x}^{j}(t)$ converges a.s.~to the
exact $j$-th coordinate location as~$t\rightarrow\infty$, i.e.,
\begin{equation}
\label{main_DILAND100}
\mathbb{P}\left[\lim_{t\rightarrow\infty}\mathbf{x}^j(t)=\left(\mathbf{I-P}
(\mathbf{d}^{\ast})\right)^{-1}\mathbf{B}(\mathbf{d}^{\ast})\mathbf{u}^j,\:\forall
j=1,\ldots,m\right]=1,
\end{equation}
which are the exact sensor locations as established in
Theorem~\ref{t_DILOC}.
\end{theorem}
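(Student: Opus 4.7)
The plan is to reduce the almost-sure convergence statement of Theorem \ref{main_DILAND} to a sample-path deterministic argument. First I would introduce the centered error $\mathbf{y}^{j}(t) = \mathbf{x}^{j}(t) - \mathbf{x}^{j,\ast}$, where $\mathbf{x}^{j,\ast} = (\mathbf{I}-\mathbf{P}(\mathbf{d}^{\ast}))^{-1}\mathbf{B}(\mathbf{d}^{\ast})\mathbf{u}^{j}$ is the target vector identified in Theorem \ref{t_DILOC}. Subtracting the fixed-point identity $\mathbf{x}^{j,\ast} = \mathbf{P}(\mathbf{d}^{\ast})\mathbf{x}^{j,\ast} + \mathbf{B}(\mathbf{d}^{\ast})\mathbf{u}^{j}$ from \eqref{diland_itm} produces
\[
\mathbf{y}^{j}(t+1) = \bigl[\mathbf{I} - \alpha(t)(\mathbf{I}-\mathbf{P}(\overline{\mathbf{d}}_{t}))\bigr]\mathbf{y}^{j}(t) + \alpha(t)\,\boldsymbol{\varepsilon}(t),
\]
with forcing $\boldsymbol{\varepsilon}(t) = [\mathbf{P}(\overline{\mathbf{d}}_{t}) - \mathbf{P}(\mathbf{d}^{\ast})]\mathbf{x}^{j,\ast} + [\mathbf{B}(\overline{\mathbf{d}}_{t}) - \mathbf{B}(\mathbf{d}^{\ast})]\mathbf{u}^{j}$. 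Because the barycentric coefficients are continuous functions of the inter-node distances through the Cayley--Menger determinants (which are nonzero at the true configuration by \textbf{(A.3)}), assumption $\overline{\mathbf{(B.3)}}$ together with \eqref{noisy_dist_ass} yields $\mathbf{P}(\overline{\mathbf{d}}_{t}) \to \mathbf{P}(\mathbf{d}^{\ast})$ and $\mathbf{B}(\overline{\mathbf{d}}_{t}) \to \mathbf{B}(\mathbf{d}^{\ast})$ a.s., hence $\boldsymbol{\varepsilon}(t) \to \mathbf{0}$ a.s.

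I would then restrict attention to the probability-one event $\Gamma$ on which all these convergences occur. On a fixed good sample path in $\Gamma$ the error recursion becomes a deterministic time-varying linear system with asymptotically vanishing forcing; this is the manoeuvre that sidesteps the obstruction flagged in Section \ref{intro}, namely that $\alpha(t)\boldsymbol{\varepsilon}(t)$ is not a martingale-difference sequence (since $\overline{\mathbf{d}}_{t}$ aggregates the entire measurement history), so the stochastic-approximation machinery used for DLRE in Theorem \ref{thm:drle} is not directly available.

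To drive $\mathbf{y}^{j}(t) \to \mathbf{0}$ on $\Gamma$, I would use a quadratic Lyapunov function. Since Theorem \ref{t_DILOC} forces $\rho(\mathbf{P}(\mathbf{d}^{\ast})) < 1$, every eigenvalue of $\mathbf{I}-\mathbf{P}(\mathbf{d}^{\ast})$ has strictly positive real part, so the continuous-time Lyapunov matrix equation supplies some $\mathbf{Q} \succ 0$ with
\[
\mathbf{Q}(\mathbf{I}-\mathbf{P}(\mathbf{d}^{\ast})) + (\mathbf{I}-\mathbf{P}(\mathbf{d}^{\ast}))^{T}\mathbf{Q} \succeq 2c\,\mathbf{I}
\]
for some $c>0$. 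Setting $V(t) = \mathbf{y}^{j}(t)^{T}\mathbf{Q}\mathbf{y}^{j}(t)$, expanding $V(t+1)$, using $\mathbf{P}(\overline{\mathbf{d}}_{t}) = \mathbf{P}(\mathbf{d}^{\ast}) + o(1)$ pathwise to replace the time-varying matrix by its limit up to an $o(1)$ perturbation, applying a weighted AM--GM step to the cross term $\alpha(t)\,\mathbf{y}^{j}(t)^{T}\mathbf{Q}\boldsymbol{\varepsilon}(t)$, and absorbing the $O(\alpha(t)^{2})$ residuals using $\alpha(t) \to 0$, I expect to obtain, for all $t$ large enough,
\[
V(t+1) \leq (1 - c'\,\alpha(t))\,V(t) + \alpha(t)\,\eta(t),
\]
with $c'>0$ and $\eta(t) \to 0$ pathwise. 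A standard deterministic lemma on non-negative sequences satisfying this inequality, requiring only $\sum_{t}\alpha(t) = \infty$ and $\alpha(t) \to 0$, then forces $V(t) \to 0$, hence $\mathbf{y}^{j}(t) \to \mathbf{0}$ on $\Gamma$; intersecting over the finite index set $j=1,\ldots,m$ gives the almost-sure statement \eqref{main_DILAND100}.

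The main technical obstacle I anticipate is the Lyapunov-descent step: one must simultaneously control the $O(\alpha(t)^{2})$ quadratic residuals, the matrix-perturbation term $\alpha(t)\,(\mathbf{P}(\overline{\mathbf{d}}_{t}) - \mathbf{P}(\mathbf{d}^{\ast}))\mathbf{y}^{j}(t)$, and the $\boldsymbol{\varepsilon}(t)$ cross term without invoking the square-summability condition $\sum_{t}\alpha(t)^{2} < \infty$ that drove the DLRE analysis. Exploiting only $\alpha(t) \to 0$ together with the pathwise decay of $\boldsymbol{\varepsilon}(t)$ to dominate all of these contributions by the linear negative drift $-c'\alpha(t)\,V(t)$ is precisely what unlocks the broader family of weight exponents $0 < \delta \leq 1$ allowed by \eqref{form_alpha}, and hence the faster convergence promised for DILAND in Section \ref{disc_DILAND}.
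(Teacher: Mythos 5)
Your proposal is correct, but it takes a genuinely different route from the paper. The paper compares $\mathbf{x}^j(t)$ against an auxiliary sequence $\widetilde{\mathbf{x}}^j(t)$ run with the \emph{exact} matrices $\mathbf{P}(\mathbf{d}^{\ast}),\mathbf{B}(\mathbf{d}^{\ast})$ (which converges to the fixed point by Theorem~\ref{t_DILOC}); the resulting error recursion~\eqref{main6} has a forcing term $\alpha(t)(\mathbf{P}(\overline{\mathbf{d}}_{t})-\mathbf{P}(\mathbf{d}^{\ast}))\mathbf{x}^j(t)$ that involves the \emph{iterate itself}, so the paper must first prove a separate a.s.\ boundedness result (Lemma~\ref{bound}) before running the pathwise contraction argument; it exploits $\rho(\mathbf{P}(\mathbf{d}^{\ast}))<1$ through a weighted operator norm $\left\|\cdot\right\|_{P}$ with $\left\|\mathbf{P}(\mathbf{d}^{\ast})\right\|_{P}<1$, and closes the argument with the explicit product--sum estimates of Lemma~\ref{prop}, which are tied to the polynomial weights~\eqref{form_alpha}. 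You instead center directly at the fixed point $\mathbf{x}^{j,\ast}$, so your forcing term $\boldsymbol{\varepsilon}(t)$ multiplies the \emph{constant} $\mathbf{x}^{j,\ast}$ and vanishes pathwise by consistency alone --- this eliminates the need for any boundedness lemma, which is a genuine simplification. You replace the weighted norm by a quadratic Lyapunov function $\mathbf{y}^{T}\mathbf{Q}\mathbf{y}$ (valid, since $\rho(\mathbf{P}(\mathbf{d}^{\ast}))<1$ implies $\mathbf{I}-\mathbf{P}(\mathbf{d}^{\ast})$ is positive stable), at the cost of having to absorb $O(\alpha^{2}(t))$ cross and quadratic terms into the linear drift, whereas the paper's induced-norm route gives a first-order recursion on $\left\|\mathbf{e}^j(t)\right\|_{P}$ directly; and you replace Lemma~\ref{prop} by the standard descent lemma for $V(t+1)\leq(1-c^{\prime}\alpha(t))V(t)+\alpha(t)\eta(t)$ with $\eta(t)\rightarrow 0$, which only needs $\sum_t\alpha(t)=\infty$ and $\alpha(t)\rightarrow 0$ and is therefore marginally more general than the paper's weight-specific estimate. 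The one thing the paper's comparison-sequence architecture buys that yours does not immediately provide is reusability: the same auxiliary-sequence device is invoked again in Section~\ref{disc_DILAND} to layer link failures and communication noise on top of Theorem~\ref{main_DILAND} via a martingale-difference residual, whereas your direct Lyapunov argument would need to be redone for that extension. Both arguments correctly sidestep standard stochastic approximation by fixing a sample path on the probability-one event where $\overline{\mathbf{d}}_{t}\rightarrow\mathbf{d}^{\ast}$ and arguing deterministically.
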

%\begin{proof}
%The proof is provided in
%Appendix~\ref{conv_proofs_DILAND}.
%\end{proof}

\subsection{DILAND:Discussions}
\label{disc_DILAND} We discuss the consequences of
Theorem~\ref{main_DILAND}. Unlike DLRE, the proof of
Theorem~\ref{main_DILAND} does not fall under the purview of
standard stochastic approximation. This is because the system
matrices
$\mathbf{P}(\overline{\mathbf{d}}_{t}),\mathbf{B}(\overline{\mathbf{d}}_{t})$ at any
time~$t$ are a function of past distance measurements, making the
sequences of system matrices a strongly dependent sequence. On the
contrary, DLRE assumes that the sequences of system matrices are
independent over time. Thus, DLRE can be analyzed in the framework
of standard stochastic approximation (see, for
example,~\cite{Nevelson}), where it is assumed that the random
perturbations are independent over time (or more generally
martingale difference sequences.) In Appendix~\ref{conv_proofs_DILAND}, we provide a
detailed proof of Theorem~\ref{main_DILAND}, which also provides a
framework for analyzing such stochastic iterative schemes with non
Markovian perturbations.

For clarity of presentation, we ignored the effect of link
failures and additive communication noise in the analysis of
DILAND. In the presence of such effects (i.e.,
assumptions~\textbf{(B.1)-(B.2)}), DILAND can be modified
accordingly and the corresponding state update equation will be
given by: for $1\leq j\leq m$
\begin{equation}
\label{mod_DILAND}
\mathbf{x}^j(t+1) = (1-\alpha(t))\mathbf{x}^j(t) + \alpha(t)
\left[\mathbf{E}\odot\mathbf{P}\left(\overline{\mathbf{d}}_t\right)
\left(\mathbf{x}^j(t)+\mathbf{v}^j(t)\right)
+ \mathbf{E}\odot\mathbf{B}\left(\overline{\mathbf{d}}_t\right)
\left(\mathbf{u}^j+\mathbf{v}^j(t)\right)
\right],\:\:1\leq j\leq m,
\end{equation}
where~$\mathbf{v}^j(t)$ is the measurement noise under~\textbf{(B.1)}, $\odot$ is the Hadamard or pointwise product of two matrices, and
$\mathbf{E}(t)=\left\{\frac{e_{ln}(t)}{q_{ln}}\right\}$ models the link
failures as per~\textbf{(B.2)}, see Section~\ref{intro}.
%
%
%
%
%\mathbf{x}^{j}(t+1)=\left(1-\alpha\left(t\right)\right)\mathbf{x}^{j}(t)+
%\alpha(t)\left[\sum_{n\in\Omega\cap\Theta_{l}}
%\frac{e_{ln}(t)\overline{p}_{ln}(t)}{q_{ln}}\left(x_{n}^{j}(t)
%+v_{ln}^{j}(t)\right) + \sum_{k\in\kappa\cap\Theta_{l}}
%\frac{e_{lk}(t)\overline{b}_{lk}(t)}{q_{lk}}\left(u_{k}^{j}
%+v_{lk}^{j}(t)\right) \right].
%\end{equation}
Here we use the notation of Sections~\ref{intro} and~\ref{pw} in the context of
random environments. However, for~\eqref{mod_DILAND}, the weight sequence
$\{\alpha(t)\}$ needs to satisfy the additional square summability
condition, i.e., $\sum_{t\geq 0}\alpha^{2}(t)<\infty$. With this
modification, the results in Theorem~\ref{main_DILAND} will
continue to hold. The proof will incur an additional step, i.e,
after establishing Theorem~\ref{main_DILAND} with no link failures
and noise, a comparison argument will yield the results
(see~\cite{kar-moura-ramanan-IT-2008} for such an argument in the
context of distributed estimation in sensor networks.) In fact, considering a parallel update scheme with state sequence $\{\widetilde{x}(t)\}$ given by
\begin{equation}\label{diland_it100}
\widetilde{\mathbf{x}}^j(t+1) = (1-\alpha(t))\widetilde{\mathbf{x}}^j(t) + \alpha(t)
\left[\mathbf{P}\left(\overline{\mathbf{d}}_t\right)
\widetilde{\mathbf{x}}^j(t)
+ \mathbf{B}\left(\overline{\mathbf{d}}_t\right)
\widetilde{\mathbf{u}}^j\right],\:\:1\leq j\leq m,
\end{equation}
%
%
%
%
%\widetilde{x}^j_l(t+1) = (1-\alpha(t))\widetilde{x}^j_l(t) + \alpha(t)\left[
%\sum_{n\in{\Omega\cap\Theta_l}}\overline{p}_{ln}(t)
%\widetilde{x}^j_n(t) +
%\sum_{k\in{\kappa\cap\Theta_l}}\overline{b}_{lk}(t)
%u^j_k\right],\:\:l\in\Omega,\,\,1\leq j\leq m
%\end{equation}
we note that the sequence $\{\widetilde{\mathbf{x}}^j(t)\}$ converges a.s. to the exact sensor locations by Theorem~\ref{main_DILAND}. Subtracting this fictitious update scheme from the modified DILAND iterations in eqn.~(\ref{mod_DILAND}), we note that the residual $\widetilde{\mathbf{e}}^j(t)=\mathbf{x}^j(t)-\widetilde{\mathbf{x}}^j(t)$ evolves according to a stable system driven by martingale difference noise (here we note that the effects of link failures and additive channel noise are considered to be temporally independent.) The a.s.~convergence of the sequence $\left\{\widetilde{\mathbf{e}}^j(t)\right\}$ to $\mathbf{0}$ then follows by standard stochastic approximation arguments (see, for example,~\cite{Nevelson}.)

In the absence of link
failures and additive communication noise (possible with a TCP-type protocol and digital communication,) the DILAND needs the
weight sequence~$\alpha(t)$ to satisfy the weaker condition on the weights below \eqref{diland_itm} rather than~$\sum_{t\geq 0}\alpha^{2}(t)<\infty$. On the contrary, the DLRE (even in the absence of link failures
and additive communication noise) requires the additional square
summability condition on the weight sequence, because of the
presence of independent perturbations
$\{\widetilde{\mathbf{S}}_\mathbf{P}(t)\}_{t\geq0}$ and
$\{\widetilde{\mathbf{S}}_\mathbf{B}(t)\}_{t\geq0}$ with non-zero
variance. Thus, DILAND can be designed to converge faster (see also
Section~\ref{simulations} for a numerical example in this regard) than DLRE by choosing a weight sequence that sums to infinity faster than can be achieved in the DLRE. This is because the convergence rate
of such iterative algorithms, essentially, depends on the rate at
which the iteration weights sum to infinity. In particular, the square summability assumption on the weight sequence as required by DLRE (\cite{usman_loctsp:08}) limits the convergence rate to the order of $\frac{1}{\sqrt{t}}$ (\cite{kar-moura-ramanan-IT-2008}), whereas the DILAND does not require square summability of the weight sequence and much better convergence rates can be obtained by making $\{\alpha(t)\}$ sum to $\infty$ faster. However, the requirement that $\alpha(t)\rightarrow 0$ as $t\rightarrow 0$ limits the convergence rate achievable by the DILAND, which is always sub-exponential as exponential convergence rates require the $\alpha(t)$ to be bounded away from zero in the limit.\footnote{The assumption $\alpha(t)\rightarrow 0$ cannot be relaxed in general. This is beacuse, although the distance estimator is assumed to be consistent, at every time $t$, there is an estimation error, and these errors may accumulate in the long run if a constant $\alpha(t)$ was used.}

With respect to triangulation, the variance of the distance measurements may be so large that the convex hull
inclusion test provided in~\cite{usman_loctsp:08} may lead to
incorrect triangulations in the DLRE. However, DILAND refines
the inter-node distance estimates at every time-step, so eventually these estimates get more
accurate and one will end up with the correct triangulation, possibly, with a different set at each iteration. Interested readers can refer to \cite{usman_allerton:08} for a localization algorithm when the triangulation set, $\Theta_l$, is a function of $t$, i.e., $\Theta_l(t)$.

{\bf Remark.} DILAND estimates and updates the inter-node distances at each time step. An alternative is to estimate well the inter-node distances by initially collecting and averaging several measurements and then to run DLRE.  There are clear tradeoffs. The second scheme introduces a set-up phase, akin to a ``training phase,'' implying a delay in getting distance information, it is  sensitive to variations in sensor locations, and it will show a remnant bias because it stops the estimation of the inter-node distances after a finite number of iterations, which means there will be a residual error in the inter-sensor distances. This residual error induces a bias in the distance estimates and so in the barycentric coordinates, which means a residual error in the sensor coordinates, no matter how long DLRE is run. DILAND does not suffer from these limitations, it has no initial delay, it is robust to variations in the sensors geometries since it can sustain these to a certain degree given its adaptive nature, and it continues reducing the asymptotic error since the inter-sensor distances are continuously updated. On the other hand, DILAND's iterations are more onerous since it requires continuous updating of the inter-node distances and recomputing the barycentric coordinates. Very similar tradeoffs arise an are studied in the context of standard consensus in~\cite{karmoura-tspjan09}
\section{Simulations}\label{sim}
\label{simulations} In this section, we compare DLRE
\cite{usman_loctsp:08} with DILAND. In the first simulation, we do
not include link failures and communication noise and focus only on
distance noise. This study presents the advantage of choosing the relaxed
weight sequence, $\alpha_{\mbox{\tiny DILAND}}(t)$ below \eqref{diland_itm},
%\eqref{dil_alp},
 over $\alpha_{\mbox{\tiny DLRE}}$ in Theorem~\ref{thm:drle}.
%\eqref{dlre_alp}.
 We assume that the
distance, $d_{lj}$, between any two sensors~$l$ and~$j$ is
corrupted by additive Gaussian noise, i.e.,
\begin{equation}
\widetilde{d}_{lj}(t) = d_{lj}^\ast + w_{lj}(t),
\end{equation}
where~$w_{lj}\sim\mathcal{N}\left(0,\frac{d_{lj}}{10}\right)$.
We choose the variance of the noise in distance measurement as
$10\%$ of the actual distance. We adopt the estimator~\eqref{nent}, which is an estimator at time~$t$ given by the average of the past observations.
%, i.e.,
%\begin{eqnarray}
%\overline{d}_{lj}(t) &=&\label{dist_upd}
%\dfrac{1}{t}\sum_t\widetilde{d}_{lj}(t) =
%\dfrac{t-1}{t}\overline{d}_{lj}(t-1) +
%\dfrac{1}{t}\widetilde{d}_{lj}(t).
%\end{eqnarray}
 Hence, \eqref{nent} becomes our distance update. It is
straightforward to show that~$\overline{d}_{lj}(t)\rightarrow
d_{lj}^\ast$. We use the following weight sequences for DLRE and DILAND compared in Fig.~1 (center):
\begin{align}
\alpha_{\mbox{\tiny DLRE}}(t)
&=\dfrac{1}{t^{0.55}},&\alpha_{\mbox{\tiny DILAND}}(t)
&=\dfrac{1}{t^{0.25}}.&
\end{align}
These satisfy the persistence condition in Theorem~\ref{thm:drle}
%\eqref{dlre_alp}
 and below \eqref{diland_itm}, %in~\eqref{dil_alp},
  respectively. In particular, as noted before, $\alpha_{\mbox{\tiny DILAND}}$ does not have to satisfy~$\sum_{t\geq 0}\alpha_{\mbox{\tiny DILAND}}^{2}(t)<\infty$, which leads to faster convergence (see Fig.~1 (right.))

We simulate an~$N=50$ node network in~$m=2$-D Euclidean space. We have~$m+1=3$ anchors (with known locations) and~$M=47$
sensors (with unknown locations). This network (and
appropriate triangulations) is shown in Fig.~1 (left), where
$\nabla$s represent the anchors and~$\circ$s represent the
sensors. Fig.~1 (center) shows the weight sequences chosen for DLRE and DILAND, whereas Fig.~1 (right) shows the normalized mean
squared error
\begin{equation}
\mbox{MSE}_t = \dfrac{\dfrac{1}{M}\sum_{l=1}^M\sum_{j=1}^m
\left(x_{l}^{j}(t)-x_{l}^{{j}\ast}\right)^2}{\dfrac{1}{M}\sum_{l=1}^M\sum_{j=1}^m
\left(x_{l}^{j}(0)-x_{l}^{{j}\ast}\right)^2}.
\end{equation}
\begin{figure}
%\label{network1}
\centering
    \includegraphics[width=1.9in]{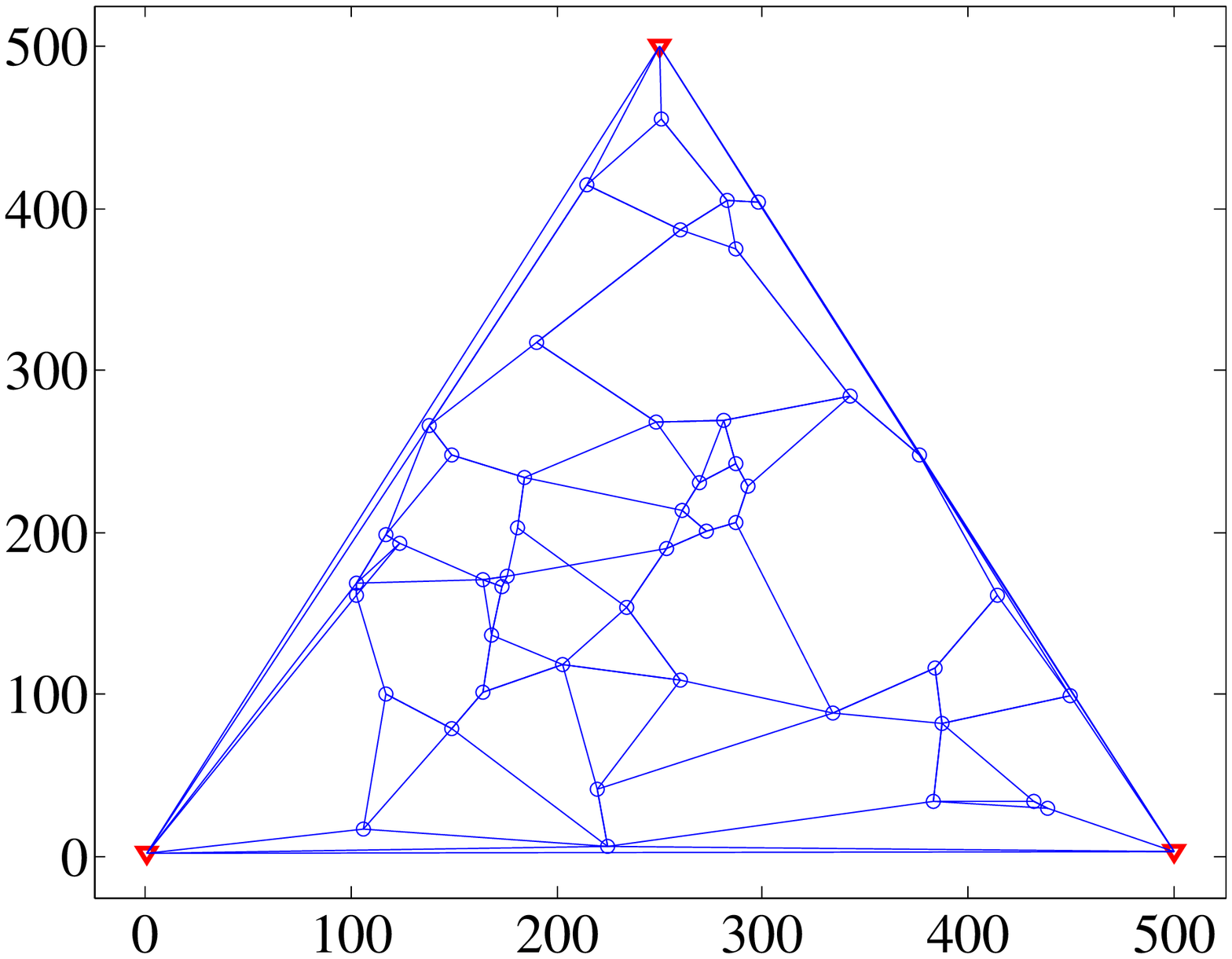}
 \hspace{.1cm}
   \includegraphics[width=1.9in]{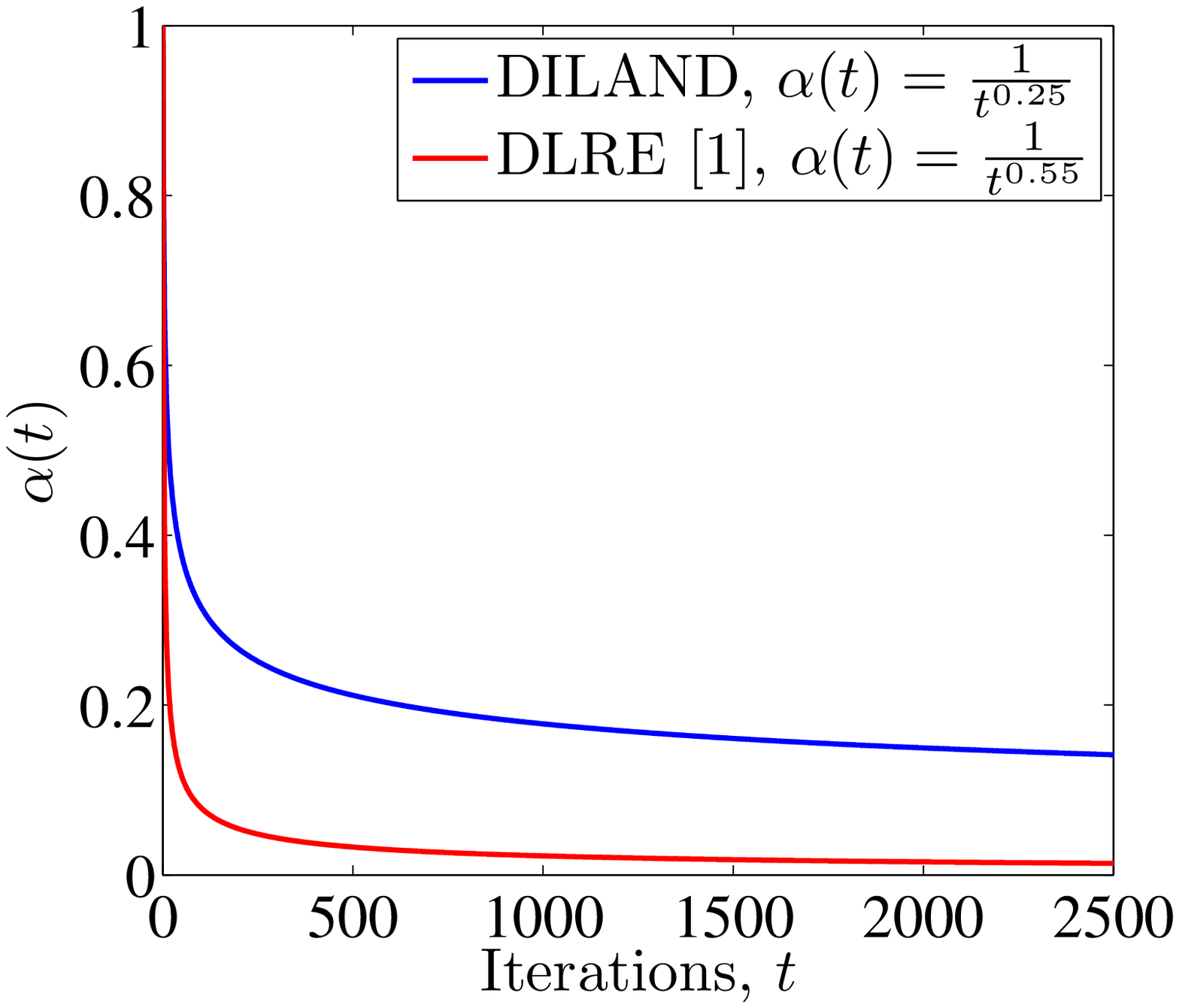}
\hspace{.1cm}
    \includegraphics[width=1.9in]{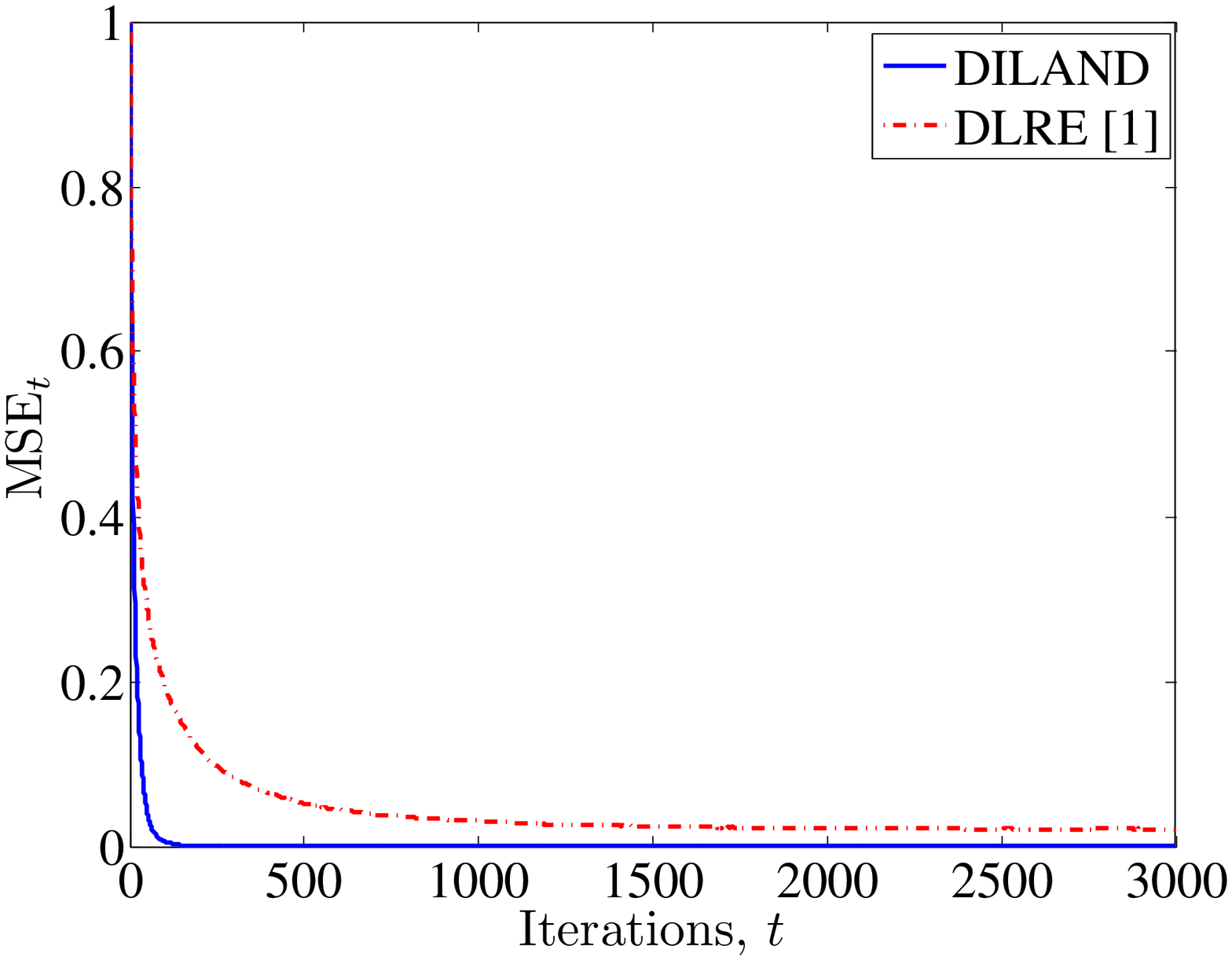}
%\subfigure[] {
%    \label{network1}
%    \includegraphics[width=1.9in]{network1.eps}
%} \hspace{.1cm} \subfigure[] {
%    \label{weight_dn}
%    \includegraphics[width=1.9in]{weight_dn.eps}
%} \hspace{.1cm} \subfigure[] {
%    \label{error2}
%    \includegraphics[width=1.9in]{error2.eps}
%}
 \caption{Distance noise only: Left: $N=50$ node network, $m=2$-D space, $m+1=3$, anchors and $M=47$ sensors. Center: Decreasing weight sequences, $\alpha(t)$, for DLRE and DILAND. Right: Normalized mean squared error for DLRE and DILAND.} \label{saa}
\end{figure}

In our second simulation, we keep the experimental setup, but include a zero-mean Gaussian random variable with unit variance as communication noise. We further assume that the communication links are active $90\%$ of the time. With this additional randomness in the localization algorithm, we cannot use the relaxed DILAND weights, but use the same weights given in Theorem~\ref{thm:drle}
%following \eqref{dlre_alp}
 for both DLRE and DILAND, see Section~\ref{disc_DILAND} on a discussion on this choice. The network (along with
appropriate triangulations) is shown in Fig.~2 (left.) Fig.~2 (center) shows the weights sequences chosen for DLRE and DILAND, whereas Fig.~2 (right) shows the normalized mean
squared error.
\begin{figure}
%\label{network2}
\centering
    \includegraphics[width=1.9in]{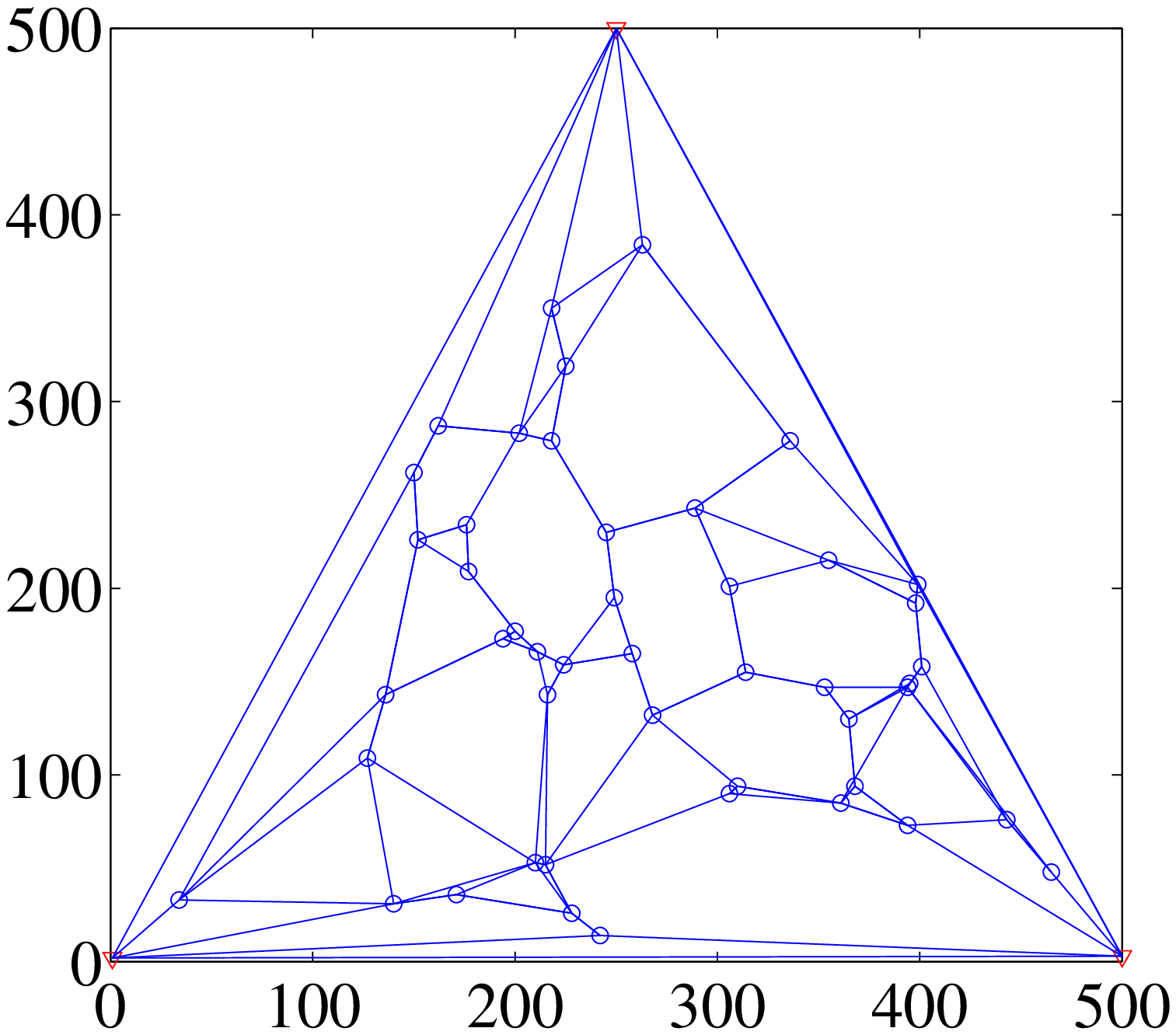}
\hspace{.1cm}
    \includegraphics[width=1.9in]{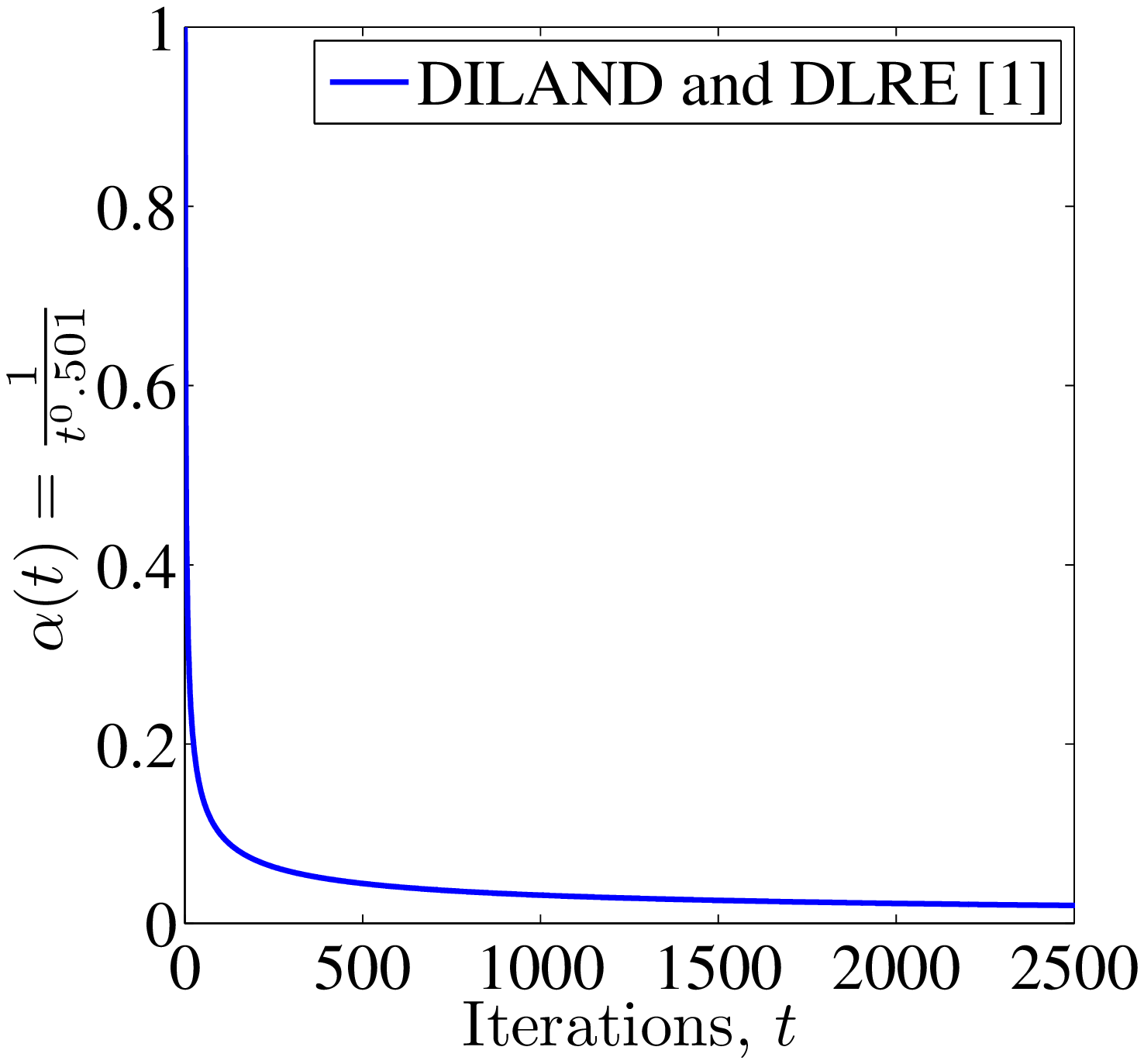}
     \hspace{.1cm}
    \includegraphics[width=1.9in]{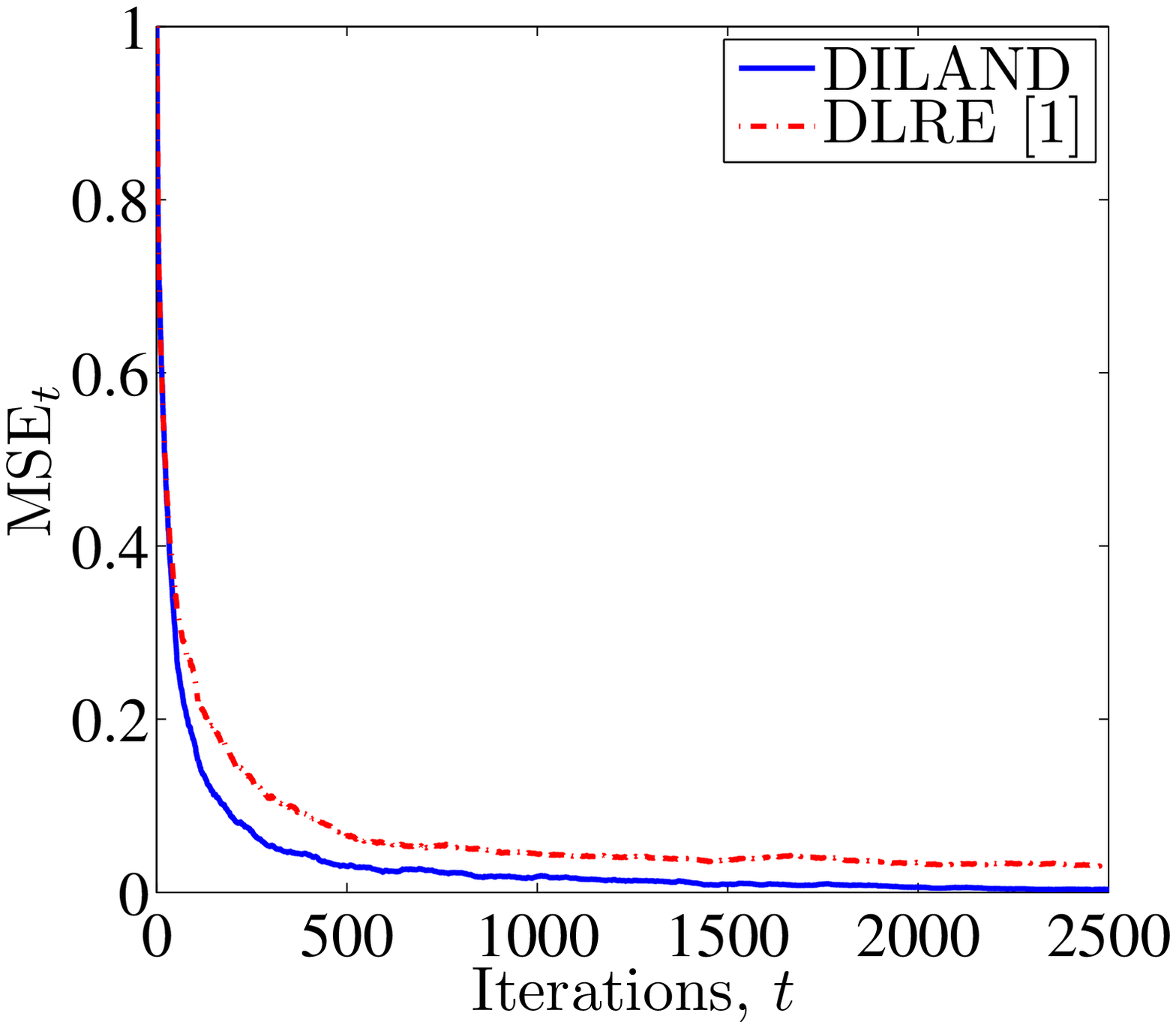}
%\subfigure[] {
%    \label{network2}
%    \includegraphics[width=1.9in]{network2.eps}
%} \hspace{.1cm}
%\subfigure[] {
%    \label{weight_alln}
%    \includegraphics[width=1.9in]{weight_alln.eps}
%} \hspace{.1cm}
%\subfigure[] {
%    \label{error_alln}
%    \includegraphics[width=1.9in]{error_alln.eps}
%}
\caption{Distance noise, communication noise, and link failure: Left: $N=50$ node network, $m=2$-D space, $m+1=3$ anchors, and $M=47$ sensors. Center: Same weight sequence, $\alpha(t)$, for DLRE and DILAND. Right: Normalized mean squared error for DLRE and DILAND.} \label{saa-2}
\end{figure}
In both studies, we note that DLRE converges with a steady state error as is
emphasized earlier, whereas, DILAND converges
almost surely to the exact sensor locations. Furthermore, the
simulations confirm that DILAND converges faster (faster convergence in the presence of all random phenomena is a consequence of refined distance measurements) because the
square summability condition on~$\alpha_{\mbox{\tiny DLRE}}(t)$ as required by DLRE is not
required by~$\alpha_{\mbox{\tiny DILAND}}(t)$.

\section{Conclusions}\label{conc}
\label{conclusions} In this correspondence, we present a distributed
localization algorithm, DILAND, that converges a.s. to the exact
sensor locations when we have (i) communication noise, (ii) random
link failures, and (iii) noisy distance measurements. We build on
our earlier work, DLRE \cite{usman_loctsp:08}, that assumes that the
noise on distance measurements results into a small signal
perturbation of the system matrices; this perturbation is biased,
in general. Due to this bias, there is a non-zero steady error in
the location estimates of DLRE.
With the additional assumption that the distance estimates at each
iteration of DILAND converge a.s. to the exact distances, we show that
the steady state error in the location estimates of DILAND is
zero. We further show that if there is no communication noise and
link failures, possible in some communication environments, the
convergence rate of DILAND is faster than the convergence rate of
DLRE. We provide simulations to assert the analytical results.

\vspace*{-.5cm}
\appendices
%
%
%
%
%
%
%

%\section{Important Result}
\section{Convergence of DILAND}
\label{imp_res}
{\small We first summarize a relevant result that is
needed for proving Theorem~\ref{main_DILAND}. The proof of this Lemma is in~\cite{kar-moura-ramanan-IT-2008}.
\begin{lem}[Lemma 18, \cite{kar-moura-ramanan-IT-2008}]
\label{prop} Let the sequences~$\{r_{1}(t)\}_{t\geq 0}$ and
$\{r_{2}(t)\}_{t\geq 0}$ be given by
\begin{equation}
\label{prop1}
r_{1}(t)=\frac{a_{1}}{(t+1)^{\delta_{1}}},~~~~~r_{2}(t)=\frac{a_{2}}{(t+1)^{\delta_{2}}}
\end{equation}
where~$a_{1},a_{2},\delta_{2}\geq 0$ and~$0\leq\delta_{1}\leq 1$.
Then, if~$\delta_{1}=\delta_{2}$, there exists~$K>0$, such that,
for non-negative integers,~$s<t$,
\begin{equation}
\label{prop2}
0\leq\sum_{k=s}^{t-1}\left[\prod_{l=k+1}^{t-1}(1-r_{1}(l))\right]r_{2}(k)\leq
K
\end{equation}
Moreover, the constant~$K$ can be chosen independently of~$s,t$.
 Also, if~$\delta_{1}<\delta_{2}$, then, for arbitrary fixed~$s$,
\begin{equation}
\label{prop3}
\lim_{t\rightarrow\infty}\sum_{k=s}^{t-1}\left[\prod_{l=k+1}^{t-1}(1-r_{1}(l))\right]r_{2}(k)=0
\end{equation}
\end{lem}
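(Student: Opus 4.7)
The plan is to handle both parts via a single telescoping identity, reserving an elementary split-and-estimate argument for the second part. Set $\Pi_k^t=\prod_{l=k+1}^{t-1}(1-r_1(l))$ with the empty-product convention $\Pi_{t-1}^t=1$. Since $\delta_1\in[0,1]$, $r_1(l)\downarrow 0$ (or $r_1$ is a constant $\leq 1$ when $\delta_1=0$, which must be assumed to make the non-negativity claim meaningful), so there exists $l_0$ with $r_1(l)\in[0,1]$ for all $l\geq l_0$. I will restrict attention to $s\geq l_0$; the finitely many terms with $l<l_0$ contribute a bounded correction absorbed into the final constant $K$.

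The core identity is $\Pi_{k-1}^t=(1-r_1(k))\Pi_k^t$, which rearranges to
\begin{equation*}
r_1(k)\Pi_k^t=\Pi_k^t-\Pi_{k-1}^t.
\end{equation*}
Summing from $k=s$ to $t-1$ telescopes cleanly to $\sum_{k=s}^{t-1}r_1(k)\Pi_k^t=1-\Pi_{s-1}^t\in[0,1]$, since each factor in $\Pi_{s-1}^t$ lies in $[0,1]$. For part (\ref{prop2}), the hypothesis $\delta_1=\delta_2$ with $a_1>0$ gives $r_2(k)=(a_2/a_1)r_1(k)$, so
\begin{equation*}
0\leq\sum_{k=s}^{t-1}\Pi_k^t r_2(k)=\frac{a_2}{a_1}\sum_{k=s}^{t-1}r_1(k)\Pi_k^t\leq\frac{a_2}{a_1},
\end{equation*}
proving (\ref{prop2}) with $K=a_2/a_1$ (plus the finite correction), independently of $s$ and $t$.

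For part (\ref{prop3}), fix $\epsilon>0$ and choose $N\geq s$ large enough that $(a_2/a_1)(N+1)^{\delta_1-\delta_2}<\epsilon/2$, which is possible because $\delta_1<\delta_2$. Split
\begin{equation*}
\sum_{k=s}^{t-1}\Pi_k^t r_2(k)=\sum_{k=s}^{N-1}\Pi_k^t r_2(k)+\sum_{k=N}^{t-1}\Pi_k^t r_2(k).
\end{equation*}
For $k\geq N$, monotonicity gives $r_2(k)=(a_2/a_1)(k+1)^{\delta_1-\delta_2}r_1(k)\leq(a_2/a_1)(N+1)^{\delta_1-\delta_2}r_1(k)$; applying the telescoping bound of $1$ to the tail sum shows the second block is at most $(a_2/a_1)(N+1)^{\delta_1-\delta_2}<\epsilon/2$ uniformly in $t$. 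For the first block, the condition $\delta_1\leq 1$ forces $\sum_l r_1(l)=\infty$, and the standard inequality $\log(1-x)\leq -x$ then yields $\Pi_k^t\to 0$ as $t\to\infty$ for each fixed $k$; being a sum of only $N-s$ terms (finite), the first block goes to $0$, so it is $<\epsilon/2$ for $t$ sufficiently large. Combining, $\limsup_t\sum_{k=s}^{t-1}\Pi_k^t r_2(k)\leq\epsilon$, and letting $\epsilon\downarrow 0$ gives (\ref{prop3}).

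The only real obstacle is the bookkeeping around the initial regime where $r_1(l)$ could exceed $1$ (breaking both non-negativity and the clean telescoping bound). This is resolved by pushing the domain to $l\geq l_0$ and absorbing the finite prefix into $K$; after that, the telescoping identity is doing essentially all the work and yields the sharp constants. No additional ingredients—integral comparisons, Abel summation, or fancier asymptotics—are needed.
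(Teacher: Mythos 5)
The paper itself gives no proof of this lemma --- it is imported verbatim as Lemma~18 of \cite{kar-moura-ramanan-IT-2008}, with the proof deferred to that reference --- so there is no in-paper argument to compare against line by line. Your proof is correct and self-contained, and it is more elementary than the route usually taken for such estimates (which bounds $\prod_{l}(1-r_1(l))$ by $\exp(-\sum_l r_1(l))$ and then controls $\sum_k \exp\bigl(-c\sum_{l>k}r_1(l)\bigr)r_2(k)$ by an integral comparison). Your key observation, the telescoping identity $\sum_{k=s}^{t-1}r_1(k)\Pi_k^t = 1-\Pi_{s-1}^t \in [0,1]$, immediately yields the sharp uniform constant $K=a_2/a_1$ in the case $\delta_1=\delta_2$ and, combined with the split at $N$ and the monotone factor $(k+1)^{\delta_1-\delta_2}$, handles the case $\delta_1<\delta_2$ with only the exponential bound needed for the finitely many leading terms. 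Two caveats you correctly flag are genuinely needed and are really defects of the lemma's statement rather than of your proof: one must have $a_1>0$ (otherwise $\Pi_k^t\equiv 1$ and the sums diverge for $\delta_2\le 1$), and when $\delta_1=0$ one must have $a_1\le 1$ (otherwise every factor $1-r_1(l)$ is negative and both the nonnegativity and the uniform bound can fail, e.g.\ $a_1=3$, $\delta_1=\delta_2=0$). In the paper's application these conditions hold automatically ($0<\delta\le 1$, $a_1=\lambda^\ast-\varepsilon\in(0,1)$, and the sums start at a large $t_0$ where $\alpha(t)<1$), so your restriction to $l\ge l_0$ and absorption of the finite prefix into $K$ is exactly the right bookkeeping; note only that this absorption rescues the upper bound but not the claimed lower bound $0\le\cdot$ for small $s$, which simply is not true in full generality as stated.
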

%\begin{proof}
%For the proof, see~\cite{kar-moura-ramanan-IT-2008}.
%\end{proof}
%}
%
%\section{Convergence of DILAND}
\label{conv_proofs_DILAND} {\small

We now prove the convergence of DILAND given by~\eqref{diland_itm}. Unless
otherwise noted, the norm~$\left\|\cdot\right\|$ refers to the
standard Euclidean 2-norm. %We recall DILAND \eqref{diland_itm} in matrix form.
%\begin{eqnarray}\label{diland_itm}
%\mathbf{x}^j(t+1) &=& (1-\alpha(t))\mathbf{x}^j(t) + \alpha(t)
%\left[\mathbf{P}\left(\overline{\mathbf{d}}_t\right)\mathbf{x}^j(t)
%+ \mathbf{B}\left(\overline{\mathbf{d}}_t\right)\mathbf{u}^j
%\right],~~~1\leq j\leq m
%\end{eqnarray}
%To prove the convergence of DILAND \eqref{diland_itm},
 The following lemma shows that the DILAND
iterations are bounded for all~$t$.
\begin{lem}
\label{bound} Consider the sequence of iterations in
\eqref{diland_itm}. We have
\begin{equation}
\label{bound2} \mathbb{P}\left[\sup_{t\geq
0}\left\|\mathbf{x}^{j}(t)\right\|<\infty,~~1\leq j\leq m\right]=1.
\end{equation}
In other words, the sequence~$\{\mathbf{x}^{j}(t)\}_{t\geq 0}$
remains bounded a.s. for all $j$.
\end{lem}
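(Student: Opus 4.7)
The plan is to combine continuity of the barycentric coordinates in the inter-node distances with the a.s.~consistency in $\overline{\mathbf{(B.3)}}$ to argue that, after a random but a.s.~finite time $T_{\omega}$, the random system matrices $\mathbf{P}(\overline{\mathbf{d}}_{t})$ inherit the strict contractivity of their nominal value $\mathbf{P}(\mathbf{d}^{\ast})$, and then to close a deterministic Gronwall-type recursion on $\|\mathbf{x}^{j}(t)\|$ from $T_{\omega}$ onward. The finitely many iterates before $T_{\omega}$ contribute only an a.s.~finite, but random, quantity.

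First I would note that the entries of $\mathbf{P}(\cdot)$ and $\mathbf{B}(\cdot)$ are continuous (rational) functions of the inter-node distances through the Cayley--Menger determinants, so assumption $\overline{\mathbf{(B.3)}}$ gives $\mathbf{P}(\overline{\mathbf{d}}_{t})\to\mathbf{P}(\mathbf{d}^{\ast})$ and $\mathbf{B}(\overline{\mathbf{d}}_{t})\to\mathbf{B}(\mathbf{d}^{\ast})$ a.s. Second, Theorem~\ref{t_DILOC} asserts invertibility of $\mathbf{I}-\mathbf{P}(\mathbf{d}^{\ast})$, and since $\mathbf{P}(\mathbf{d}^{\ast})$ is entrywise non-negative and row-substochastic (each row is the sensor-sensor part of a barycentric convex combination whose anchor part is eventually nonzero by \textbf{(A.5)}), we get $\rho(\mathbf{P}(\mathbf{d}^{\ast}))<1$. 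Hence there is a (possibly weighted) vector norm $\|\cdot\|_{\ast}$ and $\gamma\in(0,1)$ with $\|\mathbf{P}(\mathbf{d}^{\ast})\|_{\ast}\leq\gamma$ in the induced matrix norm.

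Third, by a.s.~convergence and continuity of norms, for a.e.~$\omega$ there exist a finite $T_{\omega}$ and a finite $M_{\omega}$ such that, for every $t\geq T_{\omega}$,
\[
\|\mathbf{P}(\overline{\mathbf{d}}_{t})\|_{\ast}\leq \tfrac{1+\gamma}{2}<1,\qquad \bigl\|\mathbf{B}(\overline{\mathbf{d}}_{t})\mathbf{u}^{j}\bigr\|_{\ast}\leq M_{\omega}.
\]
Because $\alpha(t)\to 0$, we may also enlarge $T_{\omega}$ so that $\alpha(t)(1-\gamma)/2\leq 1$ for $t\geq T_{\omega}$. Taking $\|\cdot\|_{\ast}$ of \eqref{diland_itm} and applying the triangle inequality then yields the convex-combination bound
\[
\|\mathbf{x}^{j}(t+1)\|_{\ast}\leq \Bigl(1-\alpha(t)\tfrac{1-\gamma}{2}\Bigr)\|\mathbf{x}^{j}(t)\|_{\ast}+\alpha(t)\,M_{\omega},\qquad t\geq T_{\omega}.
\]
A straightforward induction (the right-hand side is a convex combination of $\|\mathbf{x}^{j}(t)\|_{\ast}$ and $2M_{\omega}/(1-\gamma)$) gives $\|\mathbf{x}^{j}(t)\|_{\ast}\leq \max\{\|\mathbf{x}^{j}(T_{\omega})\|_{\ast},\,2M_{\omega}/(1-\gamma)\}$ for all $t\geq T_{\omega}$. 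The finitely many iterates $\mathbf{x}^{j}(0),\ldots,\mathbf{x}^{j}(T_{\omega})$ are clearly finite, so their maximum is an a.s.~finite random variable; norm equivalence on $\mathbb{R}^{M}$ then delivers \eqref{bound2}.

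The main obstacle is the third step: certifying that the almost-sure contractivity transfer from $\mathbf{P}(\mathbf{d}^{\ast})$ to $\mathbf{P}(\overline{\mathbf{d}}_{t})$ kicks in at a genuinely finite (though random and non-stopping-time-measurable) $T_{\omega}$. One must keep the bound $T_{\omega},M_{\omega}<\infty$ a.s.~rather than seek a uniform-in-$\omega$ constant, since the noisy barycentric coordinates can be arbitrarily ill-behaved on a null-tail of sample paths; the conclusion is then an a.s.~statement, which is exactly what \eqref{bound2} claims. The routine Gronwall step and the norm-construction for $\mathbf{P}(\mathbf{d}^{\ast})$ from $\rho<1$ are standard and need no further remark.
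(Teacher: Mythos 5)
Your argument is correct, and its skeleton matches the paper's: both proofs fix a sample path, use continuity of the Cayley--Menger construction plus $\overline{\mathbf{(B.3)}}$ to get $\mathbf{P}(\overline{\mathbf{d}}_{t})\rightarrow\mathbf{P}(\mathbf{d}^{\ast})$ and $\mathbf{B}(\overline{\mathbf{d}}_{t})\rightarrow\mathbf{B}(\mathbf{d}^{\ast})$ a.s., build a norm $\left\|\cdot\right\|_{P}$ adapted to $\rho(\mathbf{P}(\mathbf{d}^{\ast}))<1$, and establish, after an a.s.\ finite pathwise time, a recursion of the form $\left\|\mathbf{x}^{j}(t+1)\right\|\leq(1-c\,\alpha(t))\left\|\mathbf{x}^{j}(t)\right\|+\alpha(t)M_{\omega}$. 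Where you genuinely differ is in how you close this recursion. The paper unrolls it and invokes Lemma~\ref{prop} (Lemma 18 of \cite{kar-moura-ramanan-IT-2008}) to show the accumulated sum $\sum_{k}\bigl[\prod_{l=k+1}^{t-1}(1-a_{1}\alpha(l))\bigr]a_{2}\alpha(k)$ stays uniformly bounded, which leans on the specific polynomial form $\alpha(t)=a/(t+1)^{\delta}$. You instead observe that once $\alpha(t)\leq 1$ the right-hand side is a convex combination of $\left\|\mathbf{x}^{j}(t)\right\|_{\ast}$ and the ``fixed point'' $2M_{\omega}/(1-\gamma)$, so a one-line induction gives $\left\|\mathbf{x}^{j}(t)\right\|_{\ast}\leq\max\{\left\|\mathbf{x}^{j}(T_{\omega})\right\|_{\ast},\,2M_{\omega}/(1-\gamma)\}$. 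This is more elementary (no auxiliary summation lemma) and slightly more general, since it needs only $\alpha(t)\rightarrow 0$ and $\alpha(t)\in[0,1]$ eventually, not the power-law form; the paper's route, by contrast, sets up machinery (Lemma~\ref{prop}) that it reuses in the proof of Theorem~\ref{main_DILAND}. One small imprecision: the condition you impose when enlarging $T_{\omega}$ should be $\alpha(t)\leq 1$ (so that the coefficient $1-\alpha(t)$ in the triangle inequality is non-negative), which is what makes both the operator-norm bound and the convex-combination reading legitimate; your stated condition $\alpha(t)(1-\gamma)/2\leq 1$ is weaker, though $\alpha(t)\rightarrow 0$ delivers the stronger one anyway, so nothing breaks.
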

\begin{proof}
%We use a comparison argument. To this end, consider the idealized
%update
%\begin{equation}
%\label{bound3}
%\widetilde{\mathbf{x}}(t+1)=\left(1-\alpha(t)\right)\widetilde{\mathbf}{x}(t)+\alpha(t)\left[P(\mathbf{d}^{\ast})\widetilde{\mathbf{x}}(t)+
%B(\overline{\mathbf{d}^{\ast})\mathbf{u}\right]
%\end{equation}
%It follows from~\ref{loc_khan} that
%\begin{equation}
%\label{bound4}
%\lim_{t\rightarrow\infty}\widetilde{\mathbf{x}}(t)=\left(I-P(\mathbf{d}^{\ast})\right)^{-1}B(\mathbf{d}^{\ast})
%\end{equation}
We rewrite~(\ref{diland_itm}) as
\begin{eqnarray}
\mathbf{x}^j(t+1)&=& (1-\alpha(t))\mathbf{x}^j(t) + \alpha(t) \left[
\left(\mathbf{P}\left(\overline{\mathbf{d}}_t\right)+
\mathbf{P}\left(\mathbf{d}^\ast\right)
-\mathbf{P}\left(\mathbf{d}^\ast\right)\right) \mathbf{x}^j(t) +
\mathbf{B}\left(\overline{\mathbf{d}}_t\right)\mathbf{u}^j \right],\\
\label{bound5}
&=&\left[(1-\alpha(t)I)+\alpha(t)\mathbf{P}(\mathbf{d}^{\ast})\right]\mathbf{x}^{j}(t)+\alpha(t)
\left[\mathbf{P}
(\overline{\mathbf{d}}_{t})-\mathbf{P}(\mathbf{d}^{\ast})\right]
\mathbf{x}^{j}(t)+\alpha(t)\mathbf{B}(\overline{\mathbf{d}}_{t})\mathbf{u}
\end{eqnarray}
Since~$\rho\left(\mathbf{P}(\mathbf{d}^{\ast})\right)<1$ (recall
Theorem~\ref{t_DILOC}), it follows from the properties of linear
operators on Banach spaces that there exists a norm
$\left\|\cdot\right\|_{P}$ such that the corresponding induced
norm of the linear operator~$\mathbf{P}(\mathbf{d}^{\ast})$, satisfies
\begin{equation}
\label{bound6}
\left\|\mathbf{P}(\mathbf{d}^{\ast})\right\|_{P}=\max_{\mathbf{y}\neq\mathbf{0}}
\frac{\left\|\mathbf{P}(\mathbf{d}^{\ast})\mathbf{y}\right\|_{P}}{\left\|\mathbf{y}\right\|_{P}}<1
\end{equation}
Moreover, such a norm can be chosen to be equivalent to the
Euclidean norm~$\left\|\cdot\right\|$ (see, for
example,~\cite{Rodrigues}), i.e., there exist constants
$c_{1},c_{2}>0$, such that,
$c_{1}\left\|\cdot\right\|\leq\left\|\cdot\right\|_{P}\leq
c_{2}\left\|\cdot\right\|$. In other words,~$\left\|\cdot\right\|$
and~$\left\|\cdot\right\|_{P}$ generate the same topology on the
Euclidean space.
%\begin{equation}
%\label{bound7}
%c_{1}\left\|\cdot\right\|\leq\left\|\cdot\right\|_{P}\leq
%c_{2}\left\|\cdot\right\|
%\end{equation}
Since~$\alpha(t)\rightarrow 0$, we can choose~$t_{0}$ sufficiently
large, such that,~$\alpha(t)<1,~~\forall t\geq t_{0}$.
%\begin{equation}
%\label{bound8} \alpha(t)<1,~~\forall t\geq t_{0}
%\end{equation}
From the above construction, we have for, $t\geq t_{0}$,
\begin{equation}
\label{bound9}
\left\|(1-\alpha(t)\mathbf{I})+\alpha(t)\mathbf{P}(\mathbf{d}^{\ast})\right\|_{P}
\leq
1-\alpha(t)+\alpha(t)\left\|\mathbf{P}(\mathbf{d}^{\ast})\right\|_{P}
= 1-\lambda^{\ast}\alpha(t)
\end{equation}
where~$0<\lambda^{\ast}\leq 1$.
%\begin{equation}
%\label{bound10} 0<\lambda^{\ast}\leq 1
%\end{equation}
From~$\overline{\mathbf{(B.3)}}$, we recall
$\overline{\mathbf{d}}_{t}\rightarrow\mathbf{d}^{\ast}$ a.s. and
from~$\mathbf{P}(\cdot), \mathbf{B}(\cdot)$ being
continuous functions of~$\mathbf{d}_t$, we have
%\begin{equation}
%\label{bound11}
%\overline{\mathbf{d}}_{t}\rightarrow\mathbf{d}^{\ast},~~~\mbox{a.s.}
%\end{equation}
%and from the fact that~$\mathbf{P}(\cdot), \mathbf{B}(\cdot)$ are
%continuous functions of~$\mathbf{d}_t$, we have
\begin{equation}
\label{bound12} \mathbf{P}(\overline{\mathbf{d}}_{t})\rightarrow
\mathbf{P}(\mathbf{d}^{\ast}),~~~\mbox{a.s.}~~~~~~~~~~\mathbf{B}(\overline{\mathbf{d}}_{t})\rightarrow
\mathbf{B}(\mathbf{d}^{\ast}),~~~\mbox{a.s.}
\end{equation}
%\begin{equation}
%\label{bound13} B(\overline{\mathbf{d}}_{t})\rightarrow
%B(\mathbf{d}^{\ast}),~~~\mbox{a.s.}
%\end{equation}
Now fix a sample path~$\omega$. There exists~$t_{1}(\omega)$
sufficiently large, such that, for~$0<\varepsilon<\lambda^{\ast}$,
we have, if~$t\geq t_{1}(\omega)$
\begin{equation}
\label{bound14}
\left\|\mathbf{P}(\overline{\mathbf{d}}_{t})-\mathbf{P}(\mathbf{d}^{\ast})\right\|_{P}\leq
\varepsilon
\end{equation}
%\begin{equation}
%\label{bound14}
%\left\|B(\overline{\mathbf{d}}_{t})\right\|_{P}\leq
%\lambda_{1}(\omega)
%\end{equation}
Also, the a.s. convergence of
$\mathbf{B}(\overline{\mathbf{d}}_{t})$, implies there exists
$\lambda_{1}(\omega)$ such that
$\left\|B(\overline{\mathbf{d}}_{t})\right\|_{P}\leq
\lambda_{1}(\omega)$. Then, for~$t\geq \max(t_{0},t_{1}(\omega))$
\begin{eqnarray}
\label{bound15} \left\|\mathbf{x}^j(t+1)\right\|_{P} & \leq &
\left(1-\lambda^{\ast}\right)\left\|\mathbf{x}^{j}(t)\right\|_{P}+\varepsilon\alpha(t)\left\|\mathbf{x}^{j}(t)\right\|_{P}+\alpha(t)\lambda_{1}(\omega)\left\|\mathbf{u}^{j}\right\|_{P}\nonumber
\\ & = &
\left(1-(\lambda^{\ast}-\varepsilon)\alpha(t)\right)\left\|\mathbf{x}^{j}(t)\right\|_{P}+\alpha(t)\lambda_{1}(\omega)\left\|\mathbf{u}^{j}\right\|_{P}
\end{eqnarray}
Let~$t_{2}(\omega)=\max(t_{0},t_{1}(\omega))$,
$a_{1}=\lambda^{\ast}-\varepsilon$, and
$a_{2}(\omega)=\lambda_{1}(\omega)\left\|\mathbf{u}^{j}\right\|_{P}$.
Continuing the above recursion, we have for
$t\geq t_2(\omega)$ %$\max(t_{0},t_{1}(\omega))$
\begin{eqnarray}
\label{bound22} \left\|\mathbf{x}^{j}(t)\right\|_{P} & \leq &
\left(\prod_{k=t_{2}(\omega)}^{t-1}(1-a_{1}\alpha(t))\right)\left\|\mathbf{x}(t_{2}(\omega))\right\|_{P}+\sum_{k=t_{2}(\omega)}^{t-1}\left[\left(\prod_{l=k+1}^{t-1}(1-a_{1}\alpha(l))\right)a_{2}\alpha(k)\right]\nonumber
\\ & \leq &
\left\|\mathbf{x}(t_{2}(\omega))\right\|_{P}+\sum_{k=t_{2}(\omega}^{t-1}\left[\left(\prod_{l=k+1}^{t-1}(1-a_{1}\alpha(l))\right)a_{2}(\omega)\alpha(k)\right]
\end{eqnarray}
The second term falls under the purview of Lemma~\ref{prop} in
Appendix~\ref{imp_res} with~$\delta_{1}=\delta_{2}=\delta$, and we
have
\begin{equation}\label{bound113}
\sup_{t\geq 0}\left\|\mathbf{x}^{j}(t)\right\|_{P}\leq K(\omega)
\end{equation}
for some~$K(\omega)>0$. Since the above analysis holds a.s., we have
$\mathbb{P}\left[\sup_{t\geq
0}\left\|\mathbf{x}^{j}(t)\right\|_{P}<\infty\right]=1$.
%\begin{equation}
%\label{bound16} \mathbb{P}\left[\sup_{t\geq
%0}\left\|\mathbf{x}^{j}(t)\right\|_{P}<\infty\right]=1
%\end{equation}
The lemma then follows from the fact that the norms
$\left\|\cdot\right\|$ and~$\left\|\cdot\right\|_{P}$ are
equivalent. In particular, boundedness in~$\left\|\cdot\right\|$
is equivalent to boundedness in~$\left\|\cdot\right\|_{P}$.
\end{proof}

We now present the proof of Theorem~\ref{main_DILAND}.

\begin{proof}[Proof of Theorem~\ref{main_DILAND}]
We use a comparison argument. To this end, consider the idealized
update
\begin{equation}
\label{main3}
\widetilde{\mathbf{x}}^j(t+1)=\left(1-\alpha(t)\right)\widetilde{\mathbf{x}}^j(t)+\alpha(t)
\left[\mathbf{P}(\mathbf{d}^{\ast})\widetilde{\mathbf{x}}(t)+
\mathbf{B}(\mathbf{d}^{\ast})\mathbf{u}^j\right]
\end{equation}
It follows from Theorem~\ref{t_DILOC} that, for~$j=1,\ldots,m$,
\begin{equation}
\label{main4}
\lim_{t\rightarrow\infty}\widetilde{\mathbf{x}}^j(t)=\left(\mathbf{I-P}(\mathbf{d}^{\ast})\right)
^{-1}\mathbf{B}(\mathbf{d}^{\ast})\mathbf{u}^j.
\end{equation}
Define the sequence~$\{\mathbf{e}^j(t)=\mathbf{x}^j(t)-
\widetilde{\mathbf{x}}^j(t)\}_{t\geq 0}$.
%by
%\begin{equation}
%\label{main5}
%\mathbf{e}^j(t)=\mathbf{x}^j(t)-\widetilde{\mathbf{x}}^j(t)
%\end{equation}
 We then have
\begin{equation}
\label{main6}
\mathbf{e}^j(t+1)=\left(1-\alpha(t)\right)\mathbf{e}^j(t)+\alpha(t)\mathbf{P}
(\mathbf{d}^{\ast})\mathbf{e}^j(t)+\alpha(t)\left(\mathbf{P}(\overline{\mathbf{d}}_{t})-
\mathbf{P}(\mathbf{d}^{\ast})\right)\mathbf{x}^j(t)+\alpha(t)\left(\mathbf{B}(\overline{\mathbf{d}}_{t})-
\mathbf{P}(\mathbf{d}^{\ast})\right)\mathbf{u}^j.
\end{equation}
Recall the norm~$\left\|\cdot\right\|_{P}$ introduced in the proof
of Lemma~\ref{bound}. From~(\ref{bound9}), we have for~$t\geq
t_{0}$,
\begin{equation}
\label{main7}
\left\|(1-\alpha(t)\mathbf{I})+\alpha(t)\mathbf{P}(\mathbf{d}^{\ast})\right\|_{P}
\leq 1-\lambda^{\ast}\alpha(t)
\end{equation}
where~$0<\lambda^{\ast}\leq 1$. Now, fix a sample path ($\omega$). Consider~$\tau>0$. By
\eqref{bound12}, there exists sufficiently
large~$t_{3}(\omega,\tau)$ such that %, for~$t\geq t_{3}(\omega,\tau)$,
\begin{equation}
\label{main8}
\forall t\geq t_{3}(\omega,\tau):\:\:\left\|\mathbf{B}(\overline{\mathbf{d}}_{t})-
\mathbf{B}(\mathbf{d}^{\ast})\right\|_{P}\leq\tau,~~~~~~~\left\|
\mathbf{P}(\overline{\mathbf{d}}_{t})-\mathbf{P}(\mathbf{d}^{\ast})
\right\|_{P}\leq\tau
\end{equation}
%and
%\begin{equation}
%\label{main9}
%\left\|\mathbf{P}(\overline{\mathbf{d}}_{t})-\mathbf{P}(\mathbf{d}^{\ast})\right\|_{P}\leq\tau
%\end{equation}
Define~$t_{4}(\omega,\tau)=\max_{t_{0},t_{3}(\omega,\tau)}$.
We then have for~$t\geq t_{4}(\omega,\tau)$
\begin{eqnarray}
\label{main10} \left\|\mathbf{e}^j(t+1)\right\|_{P} & \leq &
\left\|(1-\alpha(t)I)+\alpha(t)\mathbf{P}(\mathbf{d}^{\ast})\right\|_{P}\left\|\mathbf{e}^j(t)\right\|_{P}\\
&+&\alpha(t)\left\|\mathbf{P}(\overline{\mathbf{d}}_{t})-\mathbf{P}
(\mathbf{d}^{\ast})\right\|_{P}\left\|\mathbf{x}^j(t)
\right\|_{P}+\alpha(t)\left\|\mathbf{B}(\overline{\mathbf{d}}_{t})-\mathbf{B}(\mathbf{d}^{\ast})\right\|_{P}\left\|
\mathbf{u}^j\right\|_{P} \nonumber \\ & \leq &
\left(1-\lambda^{\ast}\alpha(t)\right)\left\|\mathbf{e}^j(t)\right\|_{P}+\tau\alpha(t)
K(\omega)+\tau\left\|\mathbf{u}^j\right\|_{P}\nonumber \\ & \leq &
\left(1-\lambda^{\ast}\alpha(t)\right)\left\|\mathbf{e}^j(t)\right\|_{P}+\tau\alpha(t)\left(K(\omega)
+\left\|\mathbf{u}^j\right\|_{P}\right)
\end{eqnarray}
where~$K(\omega)$ is defined in~(\ref{bound113}). Continuing the
above recursion, we have for~$t>t_{4}(\omega,\tau)$
\begin{equation}
\label{main11} \left\|\mathbf{e}^j(t)\right\|_{P}\leq
\left(\prod_{k=t_{4}(\omega,\tau)}^{t-1}(1-\lambda^{\ast}\alpha(t))\right)\left\|\mathbf{e}^j(t_{4}(\omega,\tau))\right\|_{P}
+\tau\sum_{k=t_{4}(\omega,\tau)}^{t-1}\left[\left(\prod_{l=k+1}^{t-1}(1-\lambda^{\ast}\alpha(l))\right)\left(K(\omega)+\left\|\mathbf{u}^j\right\|_{P}\right)\alpha(k)\right]
\end{equation}
Using the inequality~$1-\lambda^{\ast}\alpha(t)\leq
e^{-\lambda^{\ast}\alpha(t)}$ for sufficiently small
$\lambda^{\ast}\alpha(t)$, we have
%\begin{equation}
%\label{main12} 1-\lambda^{\ast}\alpha(t)\leq
%e^{-\lambda^{\ast}\alpha(t)}
%\end{equation}
\begin{equation}
\label{main13}
\lim_{t\rightarrow\infty}\left(\prod_{k=t_{4}(\omega,\tau)}^{t-1}(1-\lambda^{\ast}\alpha(t))\right)\left\|\mathbf{e}^j(t_{4}(\omega,\tau))\right\|_{P}
 \leq
\lim_{t\rightarrow\infty}e^{\left(-\lambda^{\ast}\sum_{k=t_{4}(\omega,\tau)}^{t-1}\alpha(t)\right)}\left\|\mathbf{e}^j(t_{4}(\omega,\tau))\right\|_{P}
 =  0
\end{equation}
The last step follows from the fact that
$\sum_{k=t_{4}(\omega,\tau)}^{\infty}\alpha(t)=\infty$.
%\begin{equation}
%\label{main14}
%\sum_{k=t_{4}(\omega,\tau)}^{\infty}\alpha(t)=\infty
%\end{equation}
From Lemma~\ref{prop}, we have
\begin{equation}
\label{main15}
\sum_{k=t_{4}(\omega,\tau)}^{t-1}\left[\left(\prod_{l=k+1}^{t-1}(1-\lambda^{\ast}\alpha(l))\right)\left(K(\omega)+\left\|\mathbf{u}^j\right\|_{P}\right)\alpha(k)\right]\leq
c_{4}(\omega)
\end{equation}
Note, in particular,~$c_{4}(\omega)$ is independent of~$\tau$.
 We then have from~(\ref{main11})
\begin{eqnarray}
\label{main16}
\limsup_{t\rightarrow\infty}\left\|\mathbf{e}^j(t)\right\|_{P}&\leq&\nonumber
\limsup_{t\rightarrow\infty}\left(\prod_{k=t_{4}(\omega,\tau)}^{t-1}(1-\lambda^{\ast}\alpha(t))\right)\left\|\mathbf{e}^j(t_{4}(\omega,\tau))\right\|_{P}\\
&+&\tau\limsup_{t\rightarrow\infty}\sum_{k=t_{4}(\omega,\tau)}^{t-1}\left[\left(\prod_{l=k+1}^{t-1}
(1-\lambda^{\ast}\alpha(l))\right)\left(K(\omega)+\left\|\mathbf{u}^j\right\|_{P}\right)\alpha(k)\right]
\nonumber% \\ & \leq & \tau c_{4}(\omega).
 \leq  \tau c_{4}(\omega).
\end{eqnarray}
Since~(\ref{main16}) holds for arbitrary~$\tau>0$, we have
%\begin{equation}
%\label{main17}
$
\lim_{t\rightarrow\infty}\left\|\mathbf{e}^j(t)\right\|_{P}=0.
$
%\end{equation}
%It then follows from the equivalence of the norms
%$\left\|\cdot\right\|$ and~$\left\|\cdot\right\|_{P}$,
%\begin{equation}
%\label{main18}
%\lim_{t\rightarrow\infty}\left\|\mathbf{e}^j(t)\right\|=0
%\end{equation}
The theorem then follows from the
fact that this convergence to zero %~(\ref{main17})
 holds for~$\omega$ a.s and from the equivalence of the norms
$\left\|\cdot\right\|$ and~$\left\|\cdot\right\|_{P}$.
\end{proof}

}

\vspace*{-.25cm}

\bibliographystyle{IEEEbib}
\bibliography{IEEEabrv,globalRef}

\begin{thebibliography}{10}

\bibitem{usman_loctsp:08}
Usman~A. Khan, Soummya Kar, and Jos\'e M.~F. Moura,
\newblock ``Distributed sensor localization in random environments using
  minimal number of anchor nodes,''
\newblock {\em IEEE Transactions on Signal Processing}, vol. 57, no. 5, pp.
  2000--2016, May 2009.

\bibitem{usman_allerton:08}
Usman~A. Khan, Soummya Kar, Bruno Sinopoli, and Jos\'e M.~F. Moura,
\newblock ``Distributed sensor localization in {E}uclidean spaces: {D}ynamic
  environments,''
\newblock in {\em 46th Allerton Conference On Communication, Control, and
  Computing}, Monticello, IL, Sep. 2008, pp. 361--366.

\bibitem{patwari_thesis}
N.~Patwari,
\newblock {\em Location estimation in sensor networks},
\newblock Ph.{D}., University of Michigan--Ann arbor, 2005.

\bibitem{riemann_book}
G.~Springer,
\newblock {\em Introduction to Riemann Surfaces},
\newblock Addison-Wesley, Reading, MA, 1957.

\bibitem{cayley_men:86}
M.~J. Sippl and H.~A. Scheraga,
\newblock ``Cayley--{M}enger coordinates,''
\newblock {\em Proceedings of the National Academy of Sciences of U.S.A.}, vol.
  83, no. 8, pp. 2283--2287, Apr. 1986.

\bibitem{rappaport_book}
T.~S. Rappaport,
\newblock {\em Wireless communications: {P}rinciples and practice},
\newblock Prentice Hall Inc., New Jersey, 1996.

\bibitem{dea:03}
N.~Patwari, A.~O.~Hero III, M.~Perkins, N.~Correal, and R.~J. O'Dea,
\newblock ``Relative location estimation in wireless sensor networks,''
\newblock {\em IEEE Trans. on Signal Processing}, vol. 51, no. 8, pp.
  2137--2148, Aug. 2003.

\bibitem{Nevelson}
M.B. Nevel'son and R.Z. Has'minskii,
\newblock {\em Stochastic Approximation and Recursive Estimation},
\newblock American Mathematical Society, Providence, Rhode Island, 1973.

\bibitem{kar-moura-ramanan-IT-2008}
S.~Kar, J.~M.~F. Moura, and K.~Ramanan,
\newblock ``Distributed parameter estimation in sensor networks: Nonlinear
  observation models and imperfect communication,''
\newblock Submitted for publication, see also http://arxiv.org/abs/0809.0009,
  Aug. 2008.

\bibitem{karmoura-tspjan09}
Soummya Kar and Jos\'e M.~F. Moura,
\newblock ``Distributed consensus algorithms in sensor networks with imperfect
  communication: Link failures and channel noise,''
\newblock {\em IEEE Transactions on Signal Processing}, vol. 57, no. 1, pp.
  355--369, January 2009.

\bibitem{Rodrigues}
H.~M. Rodrigues and J.~Sola-Morales,
\newblock ``A note on the relationship between spectral radius and norms of
  bounded linear operators,''
\newblock {\em Cadernos de matematica}, vol. 9, pp. 61--65, May. 2008.

\end{thebibliography}

\end{document}